\documentclass[11pt]{article}
\usepackage[T1]{fontenc}
\usepackage{lmodern}
\usepackage[margin=1in]{geometry}
\usepackage{amsmath,amssymb,amsthm,mathtools,microtype}
\usepackage{booktabs,needspace,float}
\usepackage{xcolor}
\usepackage{tikz}
\usetikzlibrary{arrows.meta,calc,positioning}
\usepackage[colorlinks=true,linkcolor=blue!45!black,citecolor=blue!45!black,
urlcolor=blue!45!black,breaklinks=true]{hyperref}
\hypersetup{pdftitle={Unbounded Holevo additivity gaps in finite dimensions},
pdfauthor={Jinzhao Wang},
pdfsubject={Quantum channels, minimum output entropy, and quantitative strong convergence}}
\numberwithin{equation}{section}
\newtheorem{theorem}{Theorem}[section]
\newtheorem{lemma}[theorem]{Lemma}
\newtheorem{proposition}[theorem]{Proposition}
\newtheorem{corollary}[theorem]{Corollary}
\newtheoremstyle{boldremark}{.5\topsep}{.5\topsep}
  {\normalfont}{}{\bfseries}{.}{5pt plus 1pt minus 1pt}{}
\theoremstyle{boldremark}\newtheorem{remark}[theorem]{Remark}
\newtheorem*{remark*}{Remark}
\DeclareMathOperator{\Tr}{Tr}

\newcommand{\C}{\mathbb C}
\newcommand{\F}{\mathbb F}
\newcommand{\E}{\mathbb E}
\newcommand{\Prob}{\Pr}
\newcommand{\M}{M}
\newcommand{\smin}{S_{\min}}
\newcommand{\norm}[1]{\left\lVert #1\right\rVert}

\newcommand{\ket}[1]{\lvert #1\rangle}
\newcommand{\bra}[1]{\langle #1\rvert}

\title{Unbounded Holevo additivity gaps in finite dimensions}
\author{Jinzhao Wang}
\date{}
\begin{document}
\maketitle
\begin{abstract}
We establish unbounded two-use Holevo additivity gaps in finite dimensions.
For each sufficiently large fixed integer $K$ and all sufficiently large
$n$, we construct channels $T_n$ with output dimension $K^n$, input dimension
$\exp(\Theta_K(n^2))$, and
\[
 \chi(T_n^{\otimes2})-2\chi(T_n)
 \ge n\left[\frac{\log_2K}{K}-2\log_2(1+9/K)\right]-O(1/n).
\]
The gap is linear in output qubits, with an explicit quadratic
input-qubit cost and an explicit threshold on $n$. We also obtain channels
whose single-use Holevo
quantity tends to zero while their two-use Holevo information, and hence
classical capacity, diverges. The channels arise from structured tensor
products of the complementary mixed-unitary channels used in Collins's
free-probabilistic proof~\cite{Collins}. We establish minimum-output-entropy
gaps by combining Collins--Youn's product-group Haagerup inequality~\cite{CY}
with Bordenave--Collins's quantitative strong-convergence estimates~\cite{BC}.
\end{abstract}

\section{Introduction and main results}\label{sec:introduction}

Hastings showed that entangled signal states across channel uses can
increase the classical communication rate by violating additivity of the
Holevo quantity~\cite{Hastings}. The initial gap bounds were small.
Subsequent work sharpened the minimum-output-entropy gap
(equivalently, the Holevo-additivity gap after covariant extension
\cite{Shor})
and dimension bounds \cite{FKM,FK} and extended the range of dimensions
\cite{ASW11,Fukuda}. Belinschi--Collins--Nechita \cite{BCN} obtained gaps
approaching one bit, but the established gaps remained bounded as
channel dimensions grew \cite{HP}.

By contrast, for minimum output R\'enyi entropy of every fixed order
$p>1$, additivity violations growing proportionally to the logarithm of
the output dimension were already known \cite{HW}. These results
motivate the search for large violations of Holevo additivity: can the
gap grow without bound in finite dimensions, and how does it scale with
the input and output numbers of qubits? Physically, this asks whether
entanglement across channel uses can provide an arbitrarily large boost
to the classical communication rate.

A quantum channel is a completely positive
trace-preserving linear map $\Phi:\M_a(\C)\to\M_b(\C)$, where
$\M_d(\C)$ is the algebra of all $d\times d$ complex matrices.%
\footnote{We work on full matrix algebras to describe adjoints on
observables and the action on unnormalized inputs.}
The minimum output entropy
and Holevo quantity are
\begin{align}
 \smin(\Phi)&=\min_\rho S(\Phi(\rho)),\label{eq:def-smin}\\
 \chi(\Phi)&=\sup_{\{p_x,\rho_x\}}
 \left[S\!\left(\sum_xp_x\Phi(\rho_x)\right)
       -\sum_xp_xS(\Phi(\rho_x))\right],\label{eq:def-chi}
\end{align}
where $S(\rho)=-\Tr\rho\log_2\rho$ is the von Neumann entropy.
Entropy concavity allows the minimum in \eqref{eq:def-smin} to be
taken over pure inputs.
The classical coding theorem of Holevo and Schumacher--Westmoreland
\cite{Holevo,SW} identifies the unassisted classical capacity as
\begin{equation}
 C(\Phi)=\lim_{m\to\infty}\frac{\chi(\Phi^{\otimes m})}{m}
        =\sup_{m\ge1}\frac{\chi(\Phi^{\otimes m})}{m}.
 \label{eq:capacity}
\end{equation}
We study the two-use Holevo gap
\begin{equation}
 \Delta_\chi(\Phi)=\chi(\Phi^{\otimes2})-2\chi(\Phi).
 \label{eq:def-gap}
\end{equation}
Since $C(\Phi)-\chi(\Phi)\ge\Delta_\chi(\Phi)/2$, an unbounded two-use gap gives
an unbounded advantage from encoding jointly across channel uses. Additivity of
$C$ between distinct channels remains a separate open problem
\cite[Section~1.3]{LW}, which we do not address here.

Shor's equivalence theorem \cite{Shor} states that additivity of $\chi$
for every pair of finite-dimensional channels is equivalent to
additivity of $\smin$ for every such pair. The quantitative relation
uses his Weyl-covariant extension, which adds a classical input label
controlling unitary conjugations of the output. For channels $\Phi,\Psi$,
their extensions $\widetilde\Phi,\widetilde\Psi$ satisfy
\[
 \chi(\widetilde\Phi\otimes\widetilde\Psi)
 -\chi(\widetilde\Phi)-\chi(\widetilde\Psi)
 =\smin(\Phi)+\smin(\Psi)-\smin(\Phi\otimes\Psi).
\]
The right-hand side is the minimum-output-entropy gap: the entropy
saving obtained by allowing joint inputs. For a channel $\Phi$ and its
complex conjugate $\overline\Phi$, an input switch first combines the
two branches into one channel \cite{FW}. Applying the Weyl extension
then gives a channel $T$ with
\[
 \Delta_\chi(T)\ge2\smin(\Phi)-\smin(\Phi\otimes\overline\Phi).
\]
Section~\ref{sec:conversion} defines these constructions and proves the
relations.

\paragraph{Our approach.}
We compare a lower bound on every individual output entropy with an
upper bound on the joint output entropy for a Bell input.

For one factor, take $K$ unitaries $U_1,\ldots,U_K$ on $\C^N$ and the
joint evolution
\[
 x\longmapsto\frac1{\sqrt K}\sum_{i=1}^K\ket i\otimes U_i x.
\]
Discarding the label gives the mixed-unitary channel
$\rho\mapsto K^{-1}\sum_iU_i\rho U_i^*$; discarding the original
system gives its complementary channel $F$. The $K$-dimensional
output of $F$ records overlaps between the vectors $U_i x$.
For a pure input, the two reduced outputs have the same entropy.

Pair $F$ with its conjugate $\overline F$, which has the same minimum
output entropy. The Bell vector
$\ket{\omega_N}=N^{-1/2}\sum_{j=1}^N\ket{jj}$ satisfies
\[
 (U_i\otimes\overline{U_i})\ket{\omega_N}=\ket{\omega_N}.
\]
The $K$ matching branch pairs combine into one pure state of weight
$1/K$. The Shannon entropy of the resulting mixture bounds the joint
output entropy by $2\log_2K-(\log_2K)/K$.
The corresponding output of $F\otimes\overline F$ has the same
entropy. This estimate holds for every choice of unitaries; the
calculation is given in Lemma~\ref{lem:bell}.

The harder step is to choose unitaries for which \emph{every} output
of $F$ has entropy close to $\log_2K$. We bound output purity using
$S(\sigma)\ge-\log_2\Tr\sigma^2$. The adjoint of $F$ maps each output
observable to a linear combination of $U_i^*U_j$; its norm controls
the expectation on every input. A uniform bound on these combinations
therefore bounds the output purity (Lemma~\ref{lem:purity}).

Following Collins \cite[Sections~2.2--3.3]{Collins}, we first work in
an infinite-dimensional reference model offered by free probability:
unitary shifts on a basis
indexed by reduced words in generators and their inverses. A shift
adds its generator on the left and cancels adjacent inverse pairs.
These shifts form free Haar families, described in
Section~\ref{sec:entropy}. Haagerup's inequality \cite{Haagerup} bounds
their operator expressions in terms of coefficients and word lengths.
For tensor blocks we use Collins--Youn's extension to a product of
free groups, one per tensor factor \cite{CY}.

To make the gap unbounded, use $n$ channels as one block $F_n$.
The entropy bound must control inputs entangled across all $n$
factors.
The adjoint of $F_n$ gives sums of tensor products of $U_i^*U_j$.
In the reference model, these are words of length zero or two in
each group factor. Grouping terms by their nontrivial factors lets
the product-group inequality control the full sum
(Lemma~\ref{lem:cy}).

We transfer this bound to finite matrices using strong convergence,
which controls polynomial operator norms as well as normalized trace
moments \cite{CM,BC}. Norm convergence is essential here: moment
convergence alone can miss exceptional eigenvalues.
For fixed $K,n,\eta$, Bordenave--Collins's tensor-product theorem
\cite[Theorem~9.2]{BC} supplies the required approximation as $N$
grows, using $Kn$ independent Haar unitaries and a fixed finite net
of observables. Section~\ref{sec:finite-entropy} obtains
$\Tr F_n(\rho)^2\le2^\eta K^{-n}(1+9/K)^n$ for every block input,
hence
\[
 \smin(F_n)\ge n\bigl[\log_2K-\log_2(1+9/K)\bigr]-\eta.
\]
A product of $n$ Bell inputs, joining corresponding factors of the
two blocks, gives a joint entropy deficit of at least
$n(\log_2K)/K$. For fixed $\eta$ and sufficiently large fixed $K$,
the minimum-output-entropy gap therefore grows linearly with $n$.
Section~\ref{sec:conversion} applies the input switch and Weyl extension
to obtain the Holevo gap of $T$ in \eqref{eq:channel-T}, proving
Proposition~\ref{prop:qualitative}.

Section~\ref{sec:approximation} gives a quantitative input-dimension bound
by combining the observable tests in one polynomial, expressing the
$K$ generators in each factor as words in two base generators, and
reducing the polynomial to degree one using the algebraic lemmas
in Appendix~\ref{app:operators}. The base generators are
evaluated at independent Haar matrices; the resulting $K$ unitaries
within a factor may be dependent. Lemma~\ref{lem:haar} makes the
two-generator Bordenave--Collins estimate explicit, using the proof of
\cite[Lemma~9.3]{BC} and the moment range of \cite[Theorem~5.1]{BC}.
This gives $N=\lceil e^{b_Kn}\rceil$ with $b_K=O(\ln K)$ and entropy
error $O(1/n)$ for the whole block; Appendix~\ref{app:finite-threshold}
proves the estimate and checks the parameter bounds.
For fixed $K$, the final channel requires $\Theta_K(n^2)$ input
qubits and $n\log_2K$ output qubits.

\paragraph{Results.}
Throughout, $\log=\log_2$ and $\ln=\log_e$. The notation
$q_{\rm in}=\log\dim A$ and $q_{\rm out}=\log\dim B$ denotes input and
output qubit counts; rounding up does not affect the asymptotic statements.
For $K\ge2$, let
\begin{equation}
 a_K=\log(1+9/K),\qquad
 \delta_K=\frac{\log K}{K}-2a_K.
 \label{eq:delta}
\end{equation}
To obtain a positive linear gap, we will choose $K$ with $\delta_K>0$.
The inequality $\ln(1+x)\le x$ gives
$\delta_K\ge(\ln K-18)/(K\ln2)$, so $\ln K>18$ is sufficient
for $\delta_K>0$.

\Needspace{18\baselineskip}
\begin{theorem}\label{thm:main}
Fix an integer $K\ge2$, and let
\begin{equation}
 b_K=40(7\ln K+2),\qquad
 n_0(K)=\left\lceil256(1+\ln K)^2\right\rceil.
 \label{eq:dimension-constant}
\end{equation}
For every integer $n\ge n_0(K)$, let $N=\lceil e^{b_Kn}\rceil$.
There is a channel
\[
 T_n:\M_{2N^nK^{2n}}(\C)\to\M_{K^n}(\C)
\]
such that
\begin{equation}
 \chi(T_n)\le na_K+2\log\frac{n+1}{n-1},
 \qquad \chi(T_n^{\otimes2})\ge n\frac{\log K}{K}.
 \label{eq:main-holevo}
\end{equation}
Consequently,
\begin{equation}
 \Delta_\chi(T_n)\ge n\delta_K-4\log\frac{n+1}{n-1}.
 \label{eq:main-gap}
\end{equation}
\end{theorem}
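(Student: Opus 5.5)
The plan is to assemble Theorem~\ref{thm:main} from the ingredients outlined in the introduction. We build the block channel $F_n$ from $n$ tensor factors of the complementary channel $F$, show that $\smin(F_n)$ is nearly $n[\log K-a_K]$, and show that $\smin(F_n\otimes\overline F_n)$ drops by at least $n\log K/K$. Then we pass to $T_n$ through the input switch and Weyl extension of Section~\ref{sec:conversion}.

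\textbf{Step 1: the joint upper bound.} Apply Lemma~\ref{lem:bell} factorwise to the product of $n$ Bell inputs $\ket{\omega_N}^{\otimes n}$. Each factor pair is paired with its conjugate, so the joint output of $F_n\otimes\overline F_n$ is a tensor product of $n$ states. Each of these states has entropy at most $2\log K-\log K/K$, so
\[
\smin(F_n\otimes\overline F_n)\le 2n\log K-n\log K/K.
\]
This step holds for any choice of unitaries.

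\textbf{Step 2: the single-use lower bound.} Choose the unitaries in each factor as words in two independent Haar base generators in dimension $N=\lceil e^{b_Kn}\rceil$. By Lemma~\ref{lem:purity}, the output purity of $F_n$ is controlled by the operator norm of the adjoint images. These are sums of tensor products of $U_i^*U_j$, and in the free reference model Lemma~\ref{lem:cy} bounds them by $K^{-n}(1+9/K)^n$ via the Collins--Youn product-group Haagerup inequality.

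To transfer this bound to finite $N$, we combine the finitely many observable tests into one polynomial and linearize it using the appendix lemmas. We then apply the explicit two-generator estimate of Lemma~\ref{lem:haar}, with the parameter checks of Appendix~\ref{app:finite-threshold}. For $n\ge n_0(K)$ this shows that some realization satisfies
\[
\Tr F_n(\rho)^2\le\left(\tfrac{n+1}{n-1}\right)^{2}K^{-n}(1+9/K)^n
\]
for all inputs $\rho$. Since $S(\sigma)\ge-\log\Tr\sigma^2$, this gives
\[
\smin(F_n)\ge n\log K-na_K-2\log\tfrac{n+1}{n-1}.
\]
The same bound holds for $\overline F_n$.

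\textbf{Step 3: conversion.} Apply the construction of Section~\ref{sec:conversion}. Switching between $F_n$ and $\overline F_n$ with one extra input bit, then adding the Weyl extension with $K^{2n}$ labels, gives an input dimension of $2N^nK^{2n}$ and an output dimension of $K^n$.

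For the single-use bound, covariance gives $\chi(T_n)=\log K^n-\smin$. The minimum is over both branches, so
\[
\chi(T_n)\le na_K+2\log\tfrac{n+1}{n-1}.
\]
For the two-use bound, the ensemble that selects the $(F_n,\overline F_n)$ branch pair, combined with the Weyl-twirled Bell input, gives
\[
\chi(T_n^{\otimes2})\ge 2n\log K-\smin(F_n\otimes\overline F_n)\ge n\log K/K.
\]
Subtracting twice the single-use bound from the two-use bound yields \eqref{eq:main-gap}.

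\textbf{Main obstacle.} The hard step is Step 2's finite-dimensional transfer with explicit constants. The strong-convergence error must be uniform over all inputs entangled across the $n$ factors. It must also stay as small as a factor $(1+O(1/n))$ on a quantity of size $K^{-n}$, while $N$ is only $e^{O(n\ln K)}$. We would first try to make the linearized polynomial's degree and coefficient norms polynomial in $n$ and $K$. Then the Bordenave--Collins error term $N^{-c}$, times these sizes, can be beaten by choosing $b_K=40(7\ln K+2)$ and $n\ge n_0(K)$.
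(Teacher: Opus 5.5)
Your assembly matches the paper's proof. Proposition~\ref{prop:finite}, built from the net polynomial, the appendix linearization and Lemma~\ref{lem:haar}, gives the certificate \eqref{eq:norm-certificate} with $\kappa=(n+1)/(n-1)$; Lemma~\ref{lem:purity}, Lemma~\ref{lem:bell} and the switch-plus-Weyl conversion of Section~\ref{sec:conversion} then yield \eqref{eq:main-holevo} and \eqref{eq:main-gap} with exactly your arithmetic.

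One correction to your closing heuristic. The paper does not keep the linearized polynomial polynomial-size: its coefficient size obeys $\ln(2m_n)\le2K^{2n}\ln(1+2n)$, and the backward error factor reaches $e^{\gamma_Kn}$, which forces relative accuracy $\varepsilon_n=e^{-\gamma_Kn}/n$ for $H$. This is affordable only because Lemma~\ref{lem:haar} depends on $m$ through $(2m)^{(n+1)/p}$ with $p\approx N^{1/80}/2$.
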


\begin{remark*}[Scaling of the gap]
For each fixed $K$ with $\delta_K>0$, the channels in
Theorem~\ref{thm:main} satisfy, as $n\to\infty$,
\begin{equation}
 \Delta_\chi(T_n)=\Theta_K(n)=\Theta_K(\sqrt{q_{{\rm in},n}}),
 \qquad q_{{\rm out},n}=n\log K.
 \label{eq:scaling}
\end{equation}
Indeed, $4\log\frac{n+1}{n-1}=O(1/n)$, so \eqref{eq:main-gap}
gives a linear lower bound in $n$, while
$\Delta_\chi(T_n)\le2\log(K^n)=2n\log K$ gives the matching upper
bound. The input dimension gives
\begin{equation}
 q_{{\rm in},n}
 =\frac{b_K}{\ln2}n^2+2n\log K+1+o(1),
 \label{eq:input-size}
\end{equation}
which yields the scaling with input qubits in \eqref{eq:scaling}.
The theorem also guarantees
\begin{equation}
 \liminf_{n\to\infty}\frac{\Delta_\chi(T_n)}{q_{{\rm out},n}}
 \ge r_K:=\frac{\delta_K}{\log K}
 =\frac1K-\frac{2\ln(1+9/K)}{\ln K}>0.
 \label{eq:guaranteed-fraction}
\end{equation}
As $K\to\infty$,
$r_K=(1-18/\ln K)/K+O(1/(K^2\ln K))$.\footnote{For example,
$K=\lceil e^{19}\rceil=178\,482\,301$ gives
$\delta_K\approx8.08\times10^{-9}$,
$r_K\approx2.95\times10^{-10}$, and
$n_0(K)=102\,401$.
These values describe the proved sufficient guarantee; the actual
gaps may be larger.}
\end{remark*}

A gap growing linearly with the number of uses cannot be obtained merely by
regrouping uses of one fixed channel. For any fixed $\Phi$, \eqref{eq:capacity}
gives
\[
 \frac{\Delta_\chi(\Phi^{\otimes m})}{m}
 =2\left[\frac{\chi(\Phi^{\otimes2m})}{2m}
         -\frac{\chi(\Phi^{\otimes m})}{m}\right]\longrightarrow0.
\]

\begin{corollary}\label{cor:capacity}
For every integer $K\ge2$ with $\delta_K>0$, let $n_0(K)$ be as in
Theorem~\ref{thm:main} and, for each integer $n\ge n_0(K)$, choose a
channel $T_{K,n}$ satisfying that theorem. Then
\begin{align}
 C(T_{K,n})-\chi(T_{K,n})
 &\ge\tfrac12n\delta_K-2\log\frac{n+1}{n-1}
 \qquad(n\ge n_0(K)),
 \label{eq:capacity-gap}\\
 \liminf_{n\to\infty}
 \frac{\chi(T_{K,n}^{\otimes2})}{2\chi(T_{K,n})}
 &\ge1+\frac{\delta_K}{2a_K}\ge\frac{\ln K}{18}.
 \label{eq:ratio}
\end{align}
The limit inferior in \eqref{eq:ratio} is taken with $K$ fixed.\footnote{An
unbounded ratio alone already follows from the one-block construction:
take $n=1$ and $\eta=K^{-2}$ in Proposition~\ref{prop:qualitative},
using Collins's estimate and Shor's extension \cite{Collins,Shor}.
The ratio is at least $(1+o(1))\ln K/18$, while the resulting additive-gap lower
bound is of order $(\log K)/K$ and tends to zero.}
In particular, for every $A>0$ and $R>0$ there is a finite-dimensional
channel $T$ with $\chi(T)>0$ such that
\[
 C(T)-\chi(T)\ge A,
 \qquad \frac{\chi(T^{\otimes2})}{2\chi(T)}\ge R.
\]
\end{corollary}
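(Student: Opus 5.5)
The plan is to derive everything from Theorem~\ref{thm:main} and the coding formula \eqref{eq:capacity}; no new channel construction is needed.

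\textbf{Capacity gap.} By \eqref{eq:capacity}, $C(\Phi)\ge\chi(\Phi^{\otimes2})/2$. Hence
\[
C(T_{K,n})-\chi(T_{K,n})\ge\tfrac12\bigl[\chi(T_{K,n}^{\otimes2})-2\chi(T_{K,n})\bigr]=\tfrac12\Delta_\chi(T_{K,n}).
\]
Inserting \eqref{eq:main-gap} then gives \eqref{eq:capacity-gap} directly.

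\textbf{Ratio bound.} Combine the two inequalities in \eqref{eq:main-holevo}:
\[
\frac{\chi(T_{K,n}^{\otimes2})}{2\chi(T_{K,n})}\ge\frac{n\log K/K}{2na_K+4\log\frac{n+1}{n-1}}.
\]
Note that $\chi(T_{K,n})>0$ is needed for the ratio to make sense. I would secure it from the two-use bound: $\chi(T^{\otimes2})>0$ forces $T$ to be non-constant, and a non-constant channel has $\chi(T)>0$. With $K$ fixed and $n\to\infty$, the term $4\log\frac{n+1}{n-1}=O(1/n)$ drops out, so the liminf is at least
\[
\frac{\log K}{2Ka_K}=\frac{\delta_K+2a_K}{2a_K}=1+\frac{\delta_K}{2a_K}.
\]
For the second inequality in \eqref{eq:ratio}, use $a_K\le 9/(K\ln2)$, which comes from $\ln(1+x)\le x$. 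Then
\[
\frac{\log K}{2Ka_K}\ge\frac{\ln K/\ln2}{18/\ln2}=\frac{\ln K}{18}.
\]

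\textbf{Final claim.} Given $A,R>0$, first fix $K$ large enough that $\ln K>18$, which gives $\delta_K>0$, and also $\ln K/18>R$, with some strict slack. Next take $n\ge n_0(K)$ large enough that two things hold. First, $\tfrac12 n\delta_K-2\log\frac{n+1}{n-1}\ge A$. Second, the finite-$n$ ratio bound above exceeds $R$; this is possible because that bound converges to $\log K/(2Ka_K)\ge\ln K/18>R$. Then set $T=T_{K,n}$.

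The only delicate points are this strictness and the positivity of $\chi(T)$. Everything else is elementary, so I expect no real obstacle.
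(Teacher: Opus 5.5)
Your proposal is correct and follows the paper's proof essentially step for step. You use $C\ge\chi(T^{\otimes2})/2$ together with \eqref{eq:main-gap} for the capacity gap, and you get positivity of $\chi(T_{K,n})$ from the constant-channel argument. The ratio bound comes from the two inequalities in \eqref{eq:main-holevo} and $a_K\le 9/(K\ln2)$, and for the final claim you fix $K$ first and then take $n$ large. The only point the paper spells out more explicitly is the step behind your implication ``$\chi(T^{\otimes2})>0\Rightarrow T$ non-constant'': by linearity, a channel that is constant on states satisfies $\Phi(X)=(\Tr X)\sigma$, so its tensor square is constant even on entangled inputs.
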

\begin{proof}
Equation \eqref{eq:capacity} gives
$C(T_{K,n})\ge\chi(T_{K,n}^{\otimes2})/2$, and \eqref{eq:main-gap} proves
\eqref{eq:capacity-gap}.
To justify division by $\chi(T_{K,n})$ in \eqref{eq:ratio}, note that
a channel with zero Holevo quantity has constant output, so its
tensor square also has zero Holevo quantity.%
\footnote{If $\chi(\Phi)=0$, distinct outputs would give positive
Holevo information for the equally weighted two-input ensemble,
by strict concavity of entropy. Thus $\Phi(\rho)=\sigma$ for a fixed
state $\sigma$ and every input state $\rho$. By linearity,
$\Phi(X)=(\Tr X)\sigma$, so
$(\Phi\otimes\Phi)(\rho)=\sigma\otimes\sigma$ for every joint input,
including entangled ones.}
But \eqref{eq:main-holevo} gives
$\chi(T_{K,n}^{\otimes2})\ge n(\log K)/K>0$, so
$\chi(T_{K,n})>0$. The first two bounds in Theorem~\ref{thm:main} give
\[
 \frac{\chi(T_{K,n}^{\otimes2})}{2\chi(T_{K,n})}
 \ge\frac{(\log K)/K}{2a_K+(4/n)\log\frac{n+1}{n-1}}.
\]
For each fixed $K$, take a limit inferior as $n\to\infty$ and use
\[
 1+\frac{\delta_K}{2a_K}=\frac{\log K}{2Ka_K}
 \ge\frac{\ln K}{18},
 \qquad a_K\le\frac9{K\ln2}.
\]
Given $A,R>0$, choose $K$ with $\delta_K>0$ and $\ln K/18>R$.
For sufficiently large $n$, \eqref{eq:ratio} and
\eqref{eq:capacity-gap} then give both required inequalities for
$T=T_{K,n}$.
\end{proof}

Allowing $K$ to grow gives vanishing single-use Holevo information
together with diverging two-use information. This already follows
from Proposition~\ref{prop:qualitative}, a qualitative version of
Theorem~\ref{thm:main} without an explicit input-dimension bound.

\begin{corollary}[Vanishing Holevo information and diverging capacity]
\label{cor:separation}
For every $\varepsilon>0$ and $R>0$ there is a finite-dimensional quantum
channel $T$ such that
\begin{equation}
 \chi(T)\le\varepsilon,
 \qquad \frac12\chi(T^{\otimes2})\ge R.
 \label{eq:small-large}
\end{equation}
In particular, there is a sequence of finite-dimensional channels
$T^{(K)}$ for which
\begin{equation}
 \chi(T^{(K)})\longrightarrow0,
 \qquad \frac12\chi\bigl((T^{(K)})^{\otimes2}\bigr)\longrightarrow\infty,
 \qquad C(T^{(K)})\longrightarrow\infty.
 \label{eq:vanishing-diverging}
\end{equation}
\end{corollary}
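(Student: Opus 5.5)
We will obtain Corollary~\ref{cor:separation} from Proposition~\ref{prop:qualitative}, not from Theorem~\ref{thm:main}. Within Theorem~\ref{thm:main} the parameter $K$ is fixed, so $\chi(T_n)\le na_K+O(1/n)$ grows with $n$. We therefore want to let $K$ and $n$ grow together, with $na_K$ small while $n(\log K)/K$ is large. Since $a_K\approx 9/(K\ln2)$, this requires
\[
 n\ll K \qquad\text{but}\qquad n\log K\gg K .
\]
These two conditions are incompatible. Hence, inside the Theorem~\ref{thm:main} family (one $K$ per factor), the ratio $\chi(T^{\otimes2})/\chi(T)$ is only of order $\ln K$ while both quantities scale together. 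The route must instead use the freedom in Proposition~\ref{prop:qualitative}, where the entropy error $\eta$ can be chosen arbitrarily small for fixed $K,n$ (with $N$ large).

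The plan is as follows. Proposition~\ref{prop:qualitative} (as described in the introduction) yields, for given $K,n,\eta$, a channel $T$ with
\[
 \chi(T)\le n\log(1+9/K)+\eta+(\text{constant from the Weyl/switch conversion})
 \qquad\text{and}\qquad
 \chi(T^{\otimes2})\ge n(\log K)/K.
\]
To make $\chi(T)$ small while $\chi(T^{\otimes2})$ is large, I would replace the $n$-fold block by a single factor with a large number $K$ of unitaries. I would then amplify the two-use quantity differently: use the Shannon mixture bound of Lemma~\ref{lem:bell} through the capacity relation, or rather reconsider which quantity is needed. The corollary asks for $\tfrac12\chi(T^{\otimes2})\ge R$ and $\chi(T)\le\varepsilon$.

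The key observation to check is how Section~\ref{sec:conversion} bounds $\chi(T)$. By the Weyl covariance, $\chi(\widetilde\Phi)=\log d_{\rm out}-\smin(\Phi)$. If $\smin(F_n)\ge n\log K-na_K-\eta$, then
\[
 \chi(T)\le na_K+\eta,
\]
and
\[
 \chi(T^{\otimes2})\ge 2n\log K-\smin(F_n\otimes\overline F_n)\ge n(\log K)/K .
\]
Thus I need $na_K+\eta\le\varepsilon$ and $n(\log K)/K\ge 2R$ simultaneously, which as noted fails if $a_K\sim 9/K$.

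I would therefore first try to sharpen the single-factor purity constant. The factor $(1+9/K)$ is crude; Collins's estimate gives purity $\approx K^{-1}(1+O(1/K))$ but in fact should give $\smin\ge\log K-O(K^{-1/2})$-type bounds only per factor. This is the main obstacle. The cleanest alternative I would actually pursue is to use a \emph{different} $K_j$ per factor in the block, or a per-factor $K$ with purity $K^{-1}(1+c/K)$, combined with the observation that the paper's claim is explicitly attributed to Proposition~\ref{prop:qualitative}. That suggests the proposition permits a sharper per-factor constant: for $n=1$ the estimate is $\Tr F(\rho)^2\le 2^\eta/K\cdot(1+o(1))$ with $o(1)$ tending to zero as $N\to\infty$, via Haagerup's exact norm $\|\sum_{i\ne j}c_{ij}u_i^*u_j\|$ asymptotics.

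I would then choose the parameters in the following order:
\begin{enumerate}
\item Fix $R,\varepsilon$.
\item Take $K$ huge.
\item Take $n\approx 2RK/\log K$, which makes $\chi(T^{\otimes2})\ge 2R$.
\item Require $n\cdot(\text{per-factor purity excess})\le\varepsilon/2$, which requires the excess to be $o(\log K/K)$.
\item Take $\eta=\varepsilon/2$.
\end{enumerate}
The sequence in \eqref{eq:vanishing-diverging} then follows by taking $\varepsilon=1/K$ and $R=K$ along a sequence, using $C\ge\tfrac12\chi(T^{\otimes2})$ from \eqref{eq:capacity}. The hard step is establishing the per-factor excess bound $o(\log K/K)$. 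If only the $9/K$ constant is available, the argument fails, and I would have to revisit the Haagerup-based estimate in Lemma~\ref{lem:cy} to extract the exact free value.
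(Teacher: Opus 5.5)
Your proposal has a genuine gap. It comes from a misjudged comparison of orders of growth, not from a missing estimate. You claim that the requirements $n\ll K$ (so that $na_K\to0$, using $a_K\le 9/(K\ln2)$) and $n\log K\gg K$ (so that $n(\log K)/K\to\infty$) are incompatible. They are compatible: because $\log K\to\infty$, the window $K/\log K\ll n\ll K$ is nonempty, for instance $n=K/\sqrt{\ln K}$.

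Equivalently, the criterion in your step~4 is already satisfied by the available constant. That criterion asks for a per-factor excess of order $o(\log K/K)$, and $a_K\le 9/(K\ln2)$ is smaller than $(\log K)/K$ by a factor at most $9/\ln K$. As written, the proposal never proves the corollary. It declares the decisive step an obstacle and defers to an unestablished sharpening of Lemma~\ref{lem:cy} via exact free norms.

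That sharpening is unnecessary. Moreover, any version giving an excess vanishing faster than $1/K$ is impossible, because \eqref{eq:constructed-holevo-lower} gives $\chi(T)\ge 2n/K$ for every channel of the form \eqref{eq:channel-T}. Your single-factor idea also fails on its own terms: with $n=1$ the two-use bound is only $(\log K)/K\to0$.

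The fix is exactly the paper's argument. For each $K\ge2$, apply Proposition~\ref{prop:qualitative} with $n_K=\lceil K/\sqrt{\ln K}\rceil$ and $\eta_K=1/K$. This gives
\[
 \chi(T^{(K)})\le\frac{9n_K}{K\ln2}+\frac1K\longrightarrow0,
 \qquad
 \frac12\chi\bigl((T^{(K)})^{\otimes2}\bigr)\ge\frac{n_K\log K}{2K}\ge\frac{\sqrt{\ln K}}{2\ln2}\longrightarrow\infty.
\]
The bound $C\ge\frac12\chi(T^{\otimes2})$ from \eqref{eq:capacity} then gives $C(T^{(K)})\to\infty$, and \eqref{eq:small-large} follows by taking $K$ large.

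Your own choice $n=\lceil 2RK/\log K\rceil$, $\eta=\varepsilon/2$ also works as stated. It gives $na_K\le 18R/(\ln2\,\log K)+9/(K\ln2)\le\varepsilon/2$ once $K$ is sufficiently large.

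Your proposed sequence, with $\varepsilon=1/K$ and $R=K$ tied to the same $K$ that counts the unitaries, cannot work, since the construction needs $\log K\gtrsim R/\varepsilon$. Index the sequence by $K$ with $n_K$ chosen as above instead.
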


\begin{proof}
Let $n_K=\lceil K/\sqrt{\ln K}\rceil$ and $\eta_K=K^{-1}$.
Apply Proposition~\ref{prop:qualitative} separately for each integer
$K\ge2$, with $n=n_K$ and $\eta=\eta_K$, to obtain a channel
$T^{(K)}$ satisfying
\begin{align*}
 \chi(T^{(K)})
 &\le\frac{9n_K}{K\ln2}+K^{-1}\longrightarrow0,\\
 \frac12\chi\bigl((T^{(K)})^{\otimes2}\bigr)
 &\ge\frac{n_K\log K}{2K}
 \ge\frac{\sqrt{\ln K}}{2\ln2}\longrightarrow\infty.
\end{align*}
Here we used $a_K\le9/(K\ln2)$ and $n_K/K\to0$.
Equation~\eqref{eq:capacity} gives $C(T^{(K)})\to\infty$,
and choosing $K$ sufficiently large proves \eqref{eq:small-large}.
\end{proof}

\begin{remark}[Input-qubit cost of the separation]
\label{rem:separation-cost}
To obtain an explicit input cost, keep
$n_K=\lceil K/\sqrt{\ln K}\rceil$ but choose $T^{(K)}$ by the
construction \eqref{eq:channel-T} in the proof of
Theorem~\ref{thm:main}, with $n=n_K$.
Since $n_0(K)=O((\ln K)^2)$, we have $n_K/n_0(K)\to\infty$, so this is
possible for sufficiently large $K$.
For this choice, \eqref{eq:constructed-holevo-lower} and
\eqref{eq:main-holevo} give
\[
 \frac{2n_K}{K}\le\chi(T^{(K)})\le\frac{9n_K}{K\ln2}
       +2\log\frac{n_K+1}{n_K-1}\longrightarrow0,
\]
and the same two-use lower bound proves
\eqref{eq:vanishing-diverging}.
The correction term is $O(1/n_K)=o(n_K/K)$, and
$n_K/K\sim1/\sqrt{\ln K}$. The exact input dimension gives
$q_{\rm in}=(280/\ln2+o(1))K^2$, so
\[
 \chi(T^{(K)})=\Theta\!\left(\frac1{\sqrt{\log q_{\rm in}}}\right),
 \qquad
 \frac12\chi\bigl((T^{(K)})^{\otimes2}\bigr)
 =\Omega\!\left(\sqrt{\log q_{\rm in}}\right).
\]
\end{remark}

\paragraph{Related work.}
The additivity problem has been studied through channel reductions and
random-subspace methods \cite{AHW,MSW,HLW}, R\'enyi entropy counterexamples
\cite{WH,ASW10,CHLMW,GHP,SS,DL}, and further refinements of Hastings's
minimum-output-entropy counterexample \cite{BH,NS}.
For random-channel spectra and tensor powers, see
\cite{CN,CNII,BCN12,CFN,FN14,Montanaro,FN15} and the survey \cite{CNSurvey}.

Collins--Youn \cite{CY} proved regularized minimum-output-entropy
nonadditivity in an infinite-dimensional commuting-operator setting.
Lovitz--Wu \cite{LW} give random and deterministic permutation
realizations of the free-compression model, including a numerical
existence bound at output dimension $195$. They also discuss finite
approximation of the Collins--Youn construction in \cite[Section~1.3]{LW}.
Zhen--Zhu--Chen--Wang \cite{ZZCW} give a deterministic realization of
the Haagerup entropy bound for one factor using permutation matrices
restricted to the subspace orthogonal to the all-ones vector. They
also apply this construction to Holevo superadditivity.

The free-group estimate used here is Collins--Youn's
Theorem~3.3~\cite{CY} in our normalization, and its finite-dimensional
realization uses tensor-product strong convergence
\cite[Theorem~9.2]{BC}. Proposition~\ref{prop:qualitative} combines
these bounds with the Bell estimate and standard channel conversions
\cite{FW,Shor} to obtain unbounded finite-dimensional Holevo gaps,
controlling all inputs to each growing block. The quantitative
contribution is Proposition~\ref{prop:finite}: a simultaneous norm
bound for all required observables at $N=\lceil e^{b_Kn}\rceil$,
yielding the quadratic input-qubit cost in Theorem~\ref{thm:main}.

\paragraph{Limitations and open questions.}
The scaling considered in \cite[Section~2, Eq.~(11)]{HP} requires both input and
output qubit counts to grow linearly with the
gap. Improving the quadratic input cost remains open.
The guaranteed gap per output qubit is extremely small, while the
sufficient input dimensions and the threshold $n_0(K)$ are very large.
The constants in these dimension bounds are not optimized; different
linearization methods and sharper random-matrix estimates could improve
them further.
The scalar Haagerup constant $3$ is asymptotically sharp
\cite[Section~7]{ZZCW}; improving the leading entropy-deficit
coefficient by this route requires additional structure, uniformly
over tensor powers.
The construction is probabilistic. Efficiently finding suitable unitaries and
verifying the required norm bound remain open problems.

\paragraph{On AI Usage.}
GPT-6 Astra (OpenAI) assisted with proof development and checking,
literature searches and reference verification, and the drafting and
revision of this manuscript. In particular, the model proposed the
method developed in Section~\ref{sec:approximation}. The author takes
responsibility for the mathematical claims, references, and final
presentation.

\section{Qualitative realization of unbounded Holevo gaps}
\label{sec:entropy}

The construction has three steps: realize the free-group norm bound
in finite dimensions, deduce the minimum-output-entropy gap, and
convert it into a two-use Holevo gap. The resulting
Proposition~\ref{prop:qualitative} leaves the input dimension
unspecified; Section~\ref{sec:approximation} supplies an explicit bound.

Fix integers $K\ge2$ and $n\ge1$. We write $B(H)$ for the bounded
operators on a Hilbert space $H$, and $\M_d=\M_d(\C)=B(\C^d)$.
Write $\norm A$ for operator norm,
and $\norm A_2=(\Tr A^*A)^{1/2}$ for unnormalized Hilbert--Schmidt norm.
The notation $[d]$ means $\{1,\ldots,d\}$.

\subsection{Finite-dimensional block channels and their entropy bound}
\label{sec:finite-entropy}

Following Collins \cite[Section~3.1]{Collins}, take unitaries
$W_1,\ldots,W_L\in U(D)$ and a branch register $\C^L$ with basis
$\{\ket a:a\in[L]\}$. The isometry
\[
 V:\C^D\longrightarrow\C^L\otimes\C^D,
 \qquad V\ket x=L^{-1/2}\sum_{a=1}^L\ket a\otimes W_a\ket x
\]
keeps the unitary label in superposition. Discarding that register
gives the mixed-unitary channel
\[
 R:\M_D\longrightarrow\M_D,
 \qquad R(\rho)=\frac1L\sum_{a=1}^L W_a\rho W_a^*.
\]
Discarding the other register gives the complementary channel
$F:\M_D\to\M_L$, with adjoint $F^*:\M_L\to\M_D$:
\begin{equation}
 F(\rho)=\frac1L\sum_{a,b=1}^L
       \Tr(\rho W_b^*W_a)\ket a\bra b,\qquad
 F^*(A)=\frac1L\sum_{a,b=1}^L\bra a A\ket b\,W_a^*W_b.
 \label{eq:gram}
\end{equation}
The adjoint transfers output observables to input observables through
$\Tr(F(\rho)A)=\Tr(\rho F^*(A))$. For a pure input, the two reduced
outputs have the same nonzero eigenvalues and hence the same entropy.

To form a block of $n$ factors, choose an integer $N\ge1$ and arbitrary unitaries
$U_{i,j}\in U(N)$, with $i\in[K]$ and $j\in[n]$, and set
\begin{equation}
 W_a=\bigotimes_{j=1}^n U_{a_j,j},\qquad a\in[K]^n.
 \label{eq:tensor-words}
\end{equation}
The complementary channel in factor $j$ is
\[
 F^{(j)}:\M_N\longrightarrow\M_K,\qquad
 F^{(j)}(\rho)=\frac1K\sum_{r,s=1}^K
 \Tr(\rho U_{s,j}^*U_{r,j})\ket r\bra s.
\]
The block channel is
$F_n=\bigotimes_{j=1}^n F^{(j)}:\M_{N^n}\to\M_{K^n}$, so
$D=N^n$ and $L=K^n$. Since $\smin(F_n)$ includes inputs
entangled across all $n$ factors, separate bounds for the $F^{(j)}$
do not suffice.

We obtain the block bound by comparing the matrices with unitary
shifts on $\F_K^n$, the direct product of one free
group for each channel factor.\footnote{The free group $\F_K$ consists of
reduced words in $K$ symbols $g_1,\ldots,g_K$ and their inverses.
Multiplication concatenates words and cancels adjacent inverse pairs;
the empty word is the identity. Reduced word length counts the symbols
remaining after cancellation. In $\F_K^n$, multiplication is performed
separately in each factor.}
On the orthonormal basis $\{\ket h:h\in\F_K^n\}$ of
$\ell^2(\F_K^n)$, the \emph{left regular representation}
acts by the unitary shifts
\[
 \lambda(g)\ket h=\ket{gh}.
\]
In this reference model, we replace $W_a$ by
\[
 w_a=(g_{a_1},\ldots,g_{a_n})\in\F_K^n,
 \qquad u_a=\lambda(w_a),\qquad a\in[K]^n.
\]
The subscript $\lambda$ indicates evaluation in this representation.
Under $\ell^2(\F_K^n)=\ell^2(\F_K)^{\otimes n}$, shifts from
different group factors act on separate tensor factors and commute.\footnote{The
canonical trace reads off the coefficient of the identity. With respect
to this trace, the generator shifts in each factor are freely independent
Haar unitaries \cite{NicaSpeicher}; the reference model is therefore a
tensor product of free Haar families.}

The product-group Haagerup inequality bounds the norm of a weighted sum
of these shifts by the Euclidean norm of its coefficients, with constants
determined by the word length in each group factor. As an operator norm
bound, it controls every vector, including entangled ones. The next
lemma applies Collins--Youn's estimate \cite[Theorem~3.3]{CY} to the
map obtained from $F_n^*$ by replacing $W_a$ with $u_a$, then
transfers the bound to finite matrices by tensor-product strong
convergence \cite[Theorem~9.2]{BC}.

\begin{lemma}[Free and finite-dimensional norm bounds]\label{lem:cy}
Let $K\ge2$ and $n\ge1$ be integers.
Define the linear map
$\Gamma_\lambda:\M_{K^n}\to B(\ell^2(\F_K^n))$ and the constant $c_n>0$ by
\begin{equation}
 \Gamma_\lambda(A)=\frac1{K^n}\sum_{a,b\in[K]^n}\bra a A\ket b\,u_a^*u_b,
 \qquad c_n^2=\frac{(1+9/K)^n-1}{K^n}.
 \label{eq:gamma-free}
\end{equation}
For every $A\in\M_{K^n}$ with $\Tr A=0$,
\begin{equation}
 \norm{\Gamma_\lambda(A)}\le c_n\norm A_2.
 \label{eq:cy}
\end{equation}

For every $\kappa>1$ and all sufficiently large integers $N$, there
exist tuples $(U_{1,j},\ldots,U_{K,j})\in U(N)^K$, $j\in[n]$, whose
block channel $F_n$ defined by \eqref{eq:gram}--\eqref{eq:tensor-words}
satisfies
\begin{equation}
 \norm{F_n^*(A)}\le\kappa c_n\norm A_2
 \quad\text{for every traceless Hermitian }A\in\M_{K^n}.
 \label{eq:norm-certificate}
\end{equation}
\end{lemma}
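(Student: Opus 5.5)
Two steps: first the free bound \eqref{eq:cy}, obtained by decomposing $\Gamma_\lambda(A)$ according to which tensor factors carry a nontrivial word and applying Collins--Youn's product-group Haagerup inequality; then the finite bound \eqref{eq:norm-certificate}, obtained by tensor-product strong convergence together with a finite net argument.

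\emph{Free bound.} Since shifts in different factors commute, $u_a^*u_b=\bigotimes_j\lambda(g_{a_j}^{-1}g_{b_j})$. In factor $j$ the word $g_{a_j}^{-1}g_{b_j}$ is the identity when $a_j=b_j$. When $a_j\ne b_j$ it is a reduced word of length two, and the map $(a_j,b_j)\mapsto g_{a_j}^{-1}g_{b_j}$ is injective on such pairs.

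For $S\subseteq[n]$, let $X_S$ collect the terms with $a_j=b_j$ exactly for $j\notin S$. The coefficient of the word indexed by $(a_j,b_j)_{j\in S}$ is
\[
\beta_S=K^{-n}\sum_{c}\bra{a_S c}A\ket{b_S c},
\]
where $c$ runs over the diagonal indices off $S$, i.e.\ a partial trace over $[n]\setminus S$. The term $S=\emptyset$ is $K^{-n}\Tr A=0$.

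For $S\ne\emptyset$, Collins--Youn's Theorem~3.3 for $\F_K^{|S|}$, with words of length exactly two in each factor of $S$, gives $\norm{X_S}\le 3^{|S|}\norm{\beta_S}_{\ell^2}$ (up to their normalization; I would check that the constant per factor is $(1+2)$). By Cauchy--Schwarz over $c$,
\[
\norm{\beta_S}_{\ell^2}^2\le K^{-2n}K^{n-|S|}\norm{A_S}_2^2,
\]
where $A_S$ is the part of $A$ restricted to entries with $a_j\ne b_j$ on $S$ and $a_j=b_j$ off $S$. These pieces are disjoint, so $\sum_S\norm{A_S}_2^2\le\norm A_2^2$.

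Summing the triangle inequality then gives
\[
\norm{\Gamma_\lambda(A)}\le K^{-n/2}\sum_{S\ne\emptyset}(9/K)^{|S|/2}\norm{A_S}_2.
\]
A second Cauchy--Schwarz yields
\[
\norm{\Gamma_\lambda(A)}\le K^{-n/2}\Bigl(\sum_{S\ne\emptyset}(9/K)^{|S|}\Bigr)^{1/2}\norm A_2=c_n\norm A_2,
\]
since $\sum_{S\ne\emptyset}(9/K)^{|S|}=(1+9/K)^n-1$. This matches the stated $c_n$. The main risk here is matching Collins--Youn's constant to exactly $9$ per active factor.

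\emph{Finite transfer.} Take $U_{i,j}$ to be $Kn$ independent Haar unitaries in $U(N)$. Then $F_n^*(A)=P_A(U)$ is a noncommutative polynomial, of degree at most two in each factor, in the tensor-structured variables $U_{i,j}$ acting on the $j$th tensor leg, with matrix coefficients drawn from $A$. Bordenave--Collins's Theorem~9.2 gives, for each fixed $A$, almost surely
\[
\norm{P_A(U)}\to\norm{\Gamma_\lambda(A)}.
\]

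To get uniformity, the set of traceless Hermitian $A$ with $\norm A_2=1$ is a compact subset of a finite-dimensional space. Pick a finite $\epsilon$-net in it, with $\epsilon$ chosen so that $c_n+\epsilon\cdot\sup\norm{F_n^*}_{2\to\infty}\le\kappa c_n$, using the trivial bound $\norm{F_n^*(B)}\le\norm B_2$, which holds since $F_n^*$ is unital and completely positive. Strong convergence holds simultaneously on the finite net almost surely, so for all large $N$ the norms on the net are at most $(c_n+\epsilon/2)$. Linearity then extends the bound to the whole sphere.

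The two delicate points are verifying the hypotheses of Theorem~9.2, namely independent Haar families in each tensor leg with the limit being the tensor product of free families (which is exactly $\lambda$ on $\F_K^n$), and getting a single realization that works for all net points. Both are handled by the almost-sure convergence on a finite set.
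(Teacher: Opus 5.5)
Your proof is correct and follows the paper's overall plan: a free Haagerup-type bound, then Bordenave--Collins tensor strong convergence on a fixed finite net. Both halves, however, are carried out differently.

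For \eqref{eq:cy}, the paper quotes Collins--Youn's Theorem~3.3. That theorem is already the traceless bound with constant $K^{n/2}\sqrt{(1+9/K)^n-1}$, but it is stated for generators inside $\F_\infty^n$. The paper therefore adds a right-coset decomposition of $\ell^2(\F_\infty^n)$ over $\langle g_1,\ldots,g_K\rangle^n\cong\F_K^n$ to move the norm to $\ell^2(\F_K^n)$. You instead rederive that theorem from the more primitive product-group Haagerup inequality, with constant $\prod_j(l_j+1)$ for coefficients supported on words of exact multi-length $(l_j)$, applied directly in $\F_K^n$. You split by the active set $S$, use injectivity of $(a_j,b_j)\mapsto g_{a_j}^{-1}g_{b_j}$ for $a_j\ne b_j$, and then apply Cauchy--Schwarz over the partial-trace index and over $S$. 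Your arithmetic is right and yields exactly $c_n$. It makes the origin of $9/K$ transparent and removes the coset step. The only correction is bibliographic: what you invoke is the product-group Haagerup inequality, whereas Theorem~3.3 is the very statement you are proving.

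For the extension off the net, the paper uses a relative net of mesh $\xi=(\kappa-1)/(\kappa+1)$ and the self-bounding estimate $Z\le(1+\xi)c_n+\xi Z$. You instead control the net error by the a priori contraction $\norm{F_n^*(B)}\le\norm{B}\le\norm{B}_2$, which follows from unitality and positivity of $F_n^*$. That is valid and uniform in $N$. It does force an absolute mesh of order $(\kappa-1)c_n$, which shrinks exponentially in $n$ in the regime of interest. Also, with your net bound $c_n+\epsilon/2$, your condition should read $c_n+\tfrac32\epsilon\le\kappa c_n$ rather than the one you wrote. The paper's mesh is independent of $c_n$, which is why it can reuse the same self-bounding device with a $1/n$-net in Proposition~\ref{prop:finite} without enlarging the net.

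Finally, the paper uses only convergence in probability and a union bound over the net. Whether Bordenave--Collins's Theorem~9.2 gives almost sure convergence is therefore immaterial for the existence claim.
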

\begin{proof}
We first prove \eqref{eq:cy}. Collins--Youn's theorem
\cite[Theorem~3.3]{CY}, with their $N=K$ and $k=n$, gives
for every $A\in\M_{K^n}(\C)$ with $\Tr A=0$,
\[
 \left\|\sum_{a,b\in[K]^n}\bra a A\ket b\,
 \widetilde u_a^*\widetilde u_b\right\|
 \le K^{n/2}\sqrt{(1+9/K)^n-1}\,\norm A_2.
\]
Here
\[
 w_a=(g_{a_1},\ldots,g_{a_n})\in\F_\infty^n,
 \qquad \widetilde u_a=\widetilde\lambda(w_a),
\]
where $\widetilde\lambda$ denotes the left regular representation
of $\F_\infty^n$, and $g_1,g_2,\ldots$ are the canonical generators
of $\F_\infty$. Let
\[
 H:=\langle g_1,\ldots,g_K\rangle^n\le\F_\infty^n.
\]
Then $H\cong\F_K^n$, and every $w_a$, $a\in[K]^n$, belongs to $H$.
We identify $H$ with $\F_K^n$ and write $\lambda_H=\lambda$ for its
left regular representation.

To compare the regular representations, choose a set $\mathcal R$
of representatives for the right cosets $Ht$ of $H$ in
$\F_\infty^n$. Then
\[
 \ell^2(\F_\infty^n)
 =\bigoplus_{t\in\mathcal R}\ell^2(Ht).
\]
Each summand reduces the restriction of $\widetilde\lambda$ to $H$.
Indeed, the unitary
\[
 J_t:\ell^2(H)\longrightarrow\ell^2(Ht),
 \qquad J_t\ket h=\ket{ht},
\]
satisfies
\[
 \widetilde\lambda(g)J_t\ket h
 =\ket{ght}=J_t\lambda_H(g)\ket h,
 \qquad g,h\in H.
\]
Hence
\[
 \widetilde\lambda|_H
 \cong\bigoplus_{t\in\mathcal R}\lambda_H.
\]

Since $w_a^{-1}w_b\in H$ for all $a,b\in[K]^n$, it follows that
\[
 \sum_{a,b}\bra a A\ket b\,\widetilde u_a^*\widetilde u_b
 \cong\bigoplus_{t\in\mathcal R}
 \left(\sum_{a,b}\bra a A\ket b\,u_a^*u_b\right),
\]
where $u_a=\lambda_H(w_a)$. Therefore the two operators have the
same norm. Dividing the Collins--Youn estimate by $K^n$ yields
\[
 \left\|\frac1{K^n}\sum_{a,b}\bra a A\ket b\,u_a^*u_b\right\|
 \le K^{-n/2}\sqrt{(1+9/K)^n-1}\,\norm A_2
 =c_n\norm A_2,
\]
which is \eqref{eq:cy}.

For the finite-dimensional assertion, fix $\kappa>1$ and let
$\xi=(\kappa-1)/(\kappa+1)\in(0,1)$.
Fix a finite $\xi$-net $\mathcal T$ of the traceless Hermitian
$K^n\times K^n$ matrices of Hilbert--Schmidt norm one: every matrix
on this sphere is within distance $\xi$ of a member of $\mathcal T$.
Such a net exists by compactness and is independent of $N$.

Sample all $Kn$ unitaries independently from Haar measure, with each
$K$-tuple acting in its own tensor factor, and write $\Gamma_N=F_n^*$.
For each $A\in\mathcal T$, tensor-product strong convergence
\cite[Theorem~9.2]{BC} gives
\[
 \norm{\Gamma_N(A)}
 \xrightarrow[N\to\infty]{\mathrm{probability}}
 \norm{\Gamma_\lambda(A)}.
\]
Since the net is fixed and finite, the union bound and
\eqref{eq:cy} imply that, with probability tending to one,
$\norm{\Gamma_N(A)}\le(1+\xi)c_n$ simultaneously for all
$A\in\mathcal T$. For each sufficiently large $N$, choose such a
realization.

To extend the estimate to the whole sphere, set
\[
 Z=\sup_{A=A^*,\,\Tr A=0,\,\norm A_2=1}\norm{\Gamma_N(A)}.
\]
This supremum is finite. Approximating $A$ by a net point and using
linearity gives
\[
 Z\le(1+\xi)c_n+\xi Z,
 \qquad Z\le\frac{1+\xi}{1-\xi}c_n.
\]
Since $(1+\xi)/(1-\xi)=\kappa$, this proves
\eqref{eq:norm-certificate}.
\end{proof}

The next lemma converts the finite-dimensional adjoint norm bound
into an entropy bound by controlling the output's Hilbert--Schmidt
distance from $I_L/L$, and hence its purity.

\begin{lemma}[From an adjoint norm bound to output entropy]\label{lem:purity}
Let $F:\M_D\to\M_L$ be a channel. If $t\ge0$ and
\[
 \norm{F^*(A)}\le t\norm A_2
 \quad\text{for every traceless Hermitian }A\in\M_L,
\]
then every input state $\rho$ satisfies
\begin{equation}
 \Tr F(\rho)^2\le L^{-1}+t^2,
 \qquad \smin(F)\ge\log L-\log(1+Lt^2).
 \label{eq:purity}
\end{equation}
\end{lemma}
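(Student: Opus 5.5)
Write $\sigma=F(\rho)$ and $B=\sigma-I_L/L$. Because $F$ is trace preserving, $B$ is traceless Hermitian, and
\[
 \Tr\sigma^2=\Tr\Bigl(B+\tfrac{I_L}{L}\Bigr)^2=\norm B_2^2+\frac1L,
\]
since the cross term $2L^{-1}\Tr B$ vanishes. The purity bound therefore reduces to showing $\norm B_2\le t$.

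To bound $\norm B_2$, we use duality with the test observable $A=B$ itself:
\[
 \norm B_2^2=\Tr(\sigma B)-L^{-1}\Tr B=\Tr(F(\rho)B)=\Tr(\rho F^*(B)).
\]
Here $\rho$ is a density matrix, so $\lvert\Tr(\rho X)\rvert\le\norm X$ for every $X$. Applying the hypothesis to the traceless Hermitian matrix $B$ gives
\[
 \norm B_2^2\le\norm{F^*(B)}\le t\norm B_2 .
\]
Hence $\norm B_2\le t$; the case $B=0$ is trivial. This gives $\Tr F(\rho)^2\le L^{-1}+t^2$ for every input state, including mixed ones.

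For the entropy bound, we use the inequality $S(\sigma)\ge-\log\Tr\sigma^2$ quoted in the introduction. It holds because the Rényi-2 entropy lower-bounds the von Neumann entropy, which is Jensen's inequality applied to the concave function $\log$ with weights the eigenvalues of $\sigma$. Then
\[
 S(F(\rho))\ge-\log\bigl(L^{-1}+t^2\bigr)=\log L-\log(1+Lt^2),
\]
and minimizing over $\rho$ gives the stated bound on $\smin(F)$.

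No step is a real obstacle. The only points needing care are that the Hilbert--Schmidt and duality conventions match \eqref{eq:gram}, so that $\Tr(F(\rho)A)=\Tr(\rho F^*(A))$ holds without conjugation issues for Hermitian $A$, and that $B$ really is an admissible test matrix, meaning traceless and Hermitian.
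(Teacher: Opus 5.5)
Your proof is correct and is essentially the paper's argument: the paper also tests $F^*$ on the traceless Hermitian matrix $X=F(\rho)-I_L/L$, uses $\norm X_2^2=\Tr(\rho F^*(X))\le t\norm X_2$ to get $\norm X_2\le t$, and then applies $S(Y)\ge-\log\Tr Y^2$. Your Jensen justification of that R\'enyi-2 inequality is a harmless extra detail that the paper only cites.
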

\begin{proof}
Let $Y=F(\rho)$ and $X=Y-I_L/L$. The matrix $X$ is traceless and
Hermitian. The defining identity for $F^*$ and the fact that $\rho$ is a state give
\[
 \norm X_2^2=\Tr(YX)=\Tr(\rho F^*(X))
 \le\norm{F^*(X)}\le t\norm X_2.
\]
Hence $\norm X_2\le t$, including the case $X=0$. Since
$\Tr Y^2=L^{-1}+\norm X_2^2$, this proves the purity bound. The entropy
bound follows from $S(Y)\ge-\log\Tr Y^2$, the comparison of von Neumann
entropy with R\'enyi entropy of order two.
\end{proof}

For any family satisfying \eqref{eq:norm-certificate} with
$\kappa\ge1$, we have
\[
 1+K^n\kappa^2c_n^2
 =\kappa^2(1+9/K)^n+1-\kappa^2
 \le\kappa^2(1+9/K)^n.
\]
With $a_K=\log(1+9/K)$, Lemma~\ref{lem:purity}, applied
with $L=K^n$ and $t=\kappa c_n$, gives
\begin{equation}
 \smin(F_n)\ge n(\log K-a_K)-2\log\kappa.
 \label{eq:single-entropy}
\end{equation}
The error $2\log\kappa$ is uniform over all block inputs and tends to
zero as $\kappa\to1$. Proposition~\ref{prop:finite} will give
$\kappa=(n+1)/(n-1)$ at $N=\lceil e^{b_Kn}\rceil$ for
$n\ge n_0(K)$.

\subsection{The minimum-output-entropy gap}

For the joint entropy upper bound, pair a channel $\Phi:\M_d\to\M_b$
with its conjugate $\overline\Phi(X)=\overline{\Phi(\overline X)}$
in fixed bases. The outputs $\Phi(\rho)$ and
$\overline\Phi(\overline\rho)=\overline{\Phi(\rho)}$ have the same
spectrum, so $\smin(\overline\Phi)=\smin(\Phi)$. The maximally entangled, or Bell,
state on $\C^d\otimes\C^d$ is
\[
 \ket{\omega_d}=d^{-1/2}\sum_{j=1}^d\ket{jj},
 \qquad\Omega_d=\ket{\omega_d}\bra{\omega_d}.
\]

The Bell-state argument for conjugate channels appears in Winter
\cite[Lemma~1]{Winter} and Hayden--Winter \cite[Lemma~II.1]{HW}; see also
Collins--Nechita \cite[Section~6.1]{CN}.
The von Neumann entropy bound used below is due to Hastings
\cite[Eq.~(10) and Supplemental Lemma~1]{Hastings}; see also
Collins \cite[Proposition~3.2]{Collins}.
We give the equal-weight argument (cf.\ \cite[Theorem~4.1]{CY})
and transfer it to complementary outputs and tensor blocks.

\begin{lemma}[Bell-output entropy bound]\label{lem:bell}
Let $F:\M_N\to\M_K$ be the complementary channel associated with
arbitrary unitaries $U_1,\ldots,U_K\in U(N)$ as in \eqref{eq:gram}.
Then
\begin{equation}
 S((F\otimes\overline F)(\Omega_N))
 \le2\log K-\frac{\log K}{K}.
 \label{eq:local-bell}
\end{equation}
Consequently every tensor-product family \eqref{eq:tensor-words} satisfies
\begin{equation}
 S((F_n\otimes\overline F_n)(\Omega_D))
 \le n\left(2\log K-\frac{\log K}{K}\right).
 \label{eq:bell}
\end{equation}
\end{lemma}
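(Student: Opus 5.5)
We will pass from the complementary output to the mixed-unitary output, where the Bell-state structure is visible. Let $V$ be the Stinespring isometry of $F$, with $V\ket x=K^{-1/2}\sum_i\ket i\otimes U_i\ket x$. For the pure input $\ket{\omega_N}$ of $F\otimes\overline F$, the isometry $V\otimes\overline V$ produces a pure state on the label registers $\C^K\otimes\C^K$ and the system registers $\C^N\otimes\C^N$. The reduced state on the labels is $(F\otimes\overline F)(\Omega_N)$. The reduced state on the systems is $(R\otimes\overline R)(\Omega_N)$, where $R(\rho)=K^{-1}\sum_iU_i\rho U_i^*$. Because the global state is pure, these two reduced states have the same entropy, so it suffices to bound
\[
 S\Bigl(\frac1{K^2}\sum_{i,j=1}^K(U_i\otimes\overline{U_j})\,\Omega_N\,(U_i\otimes\overline{U_j})^*\Bigr).
\]

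\textbf{Single factor.} Since $(U_i\otimes\overline{U_i})\ket{\omega_N}=\ket{\omega_N}$, the $K$ diagonal terms $i=j$ all equal $\Omega_N$. They combine into one pure state of total weight $1/K$. The state is therefore a convex combination of $1+(K^2-K)$ pure states: $\Omega_N$ with weight $1/K$, and $K^2-K$ states with weight $1/K^2$ each. For any decomposition $\sigma=\sum_xp_x\psi_x$ into pure states, $S(\sigma)\le H(p)$. This holds because $S(\sum p_x\psi_x)\le H(p)+\sum p_xS(\psi_x)$, which is the standard upper bound on Holevo quantity, or equivalently concavity combined with the mixing bound. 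Hence
\[
 S\le\frac1K\log K+(K^2-K)\frac1{K^2}\log K^2=\frac{\log K}{K}+2\log K\Bigl(1-\frac1K\Bigr)=2\log K-\frac{\log K}{K},
\]
which is \eqref{eq:local-bell}.

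\textbf{Tensor blocks.} Under the identification $\C^{N^n}\otimes\C^{N^n}\cong\bigotimes_j(\C^N\otimes\C^N)$, which reorders tensor factors, the Bell state $\Omega_D$ with $D=N^n$ is the tensor product $\bigotimes_{j=1}^n\Omega_N$. The complex conjugate of $F_n=\bigotimes_jF^{(j)}$ is $\bigotimes_j\overline{F^{(j)}}$. Therefore $(F_n\otimes\overline F_n)(\Omega_D)$ is unitarily equivalent, via a permutation of output factors, to $\bigotimes_j(F^{(j)}\otimes\overline{F^{(j)}})(\Omega_N)$. Entropy is additive on tensor products, so applying \eqref{eq:local-bell} to each factor $j$ gives \eqref{eq:bell}.

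\textbf{Main obstacle.} The main point needing care is the mixing bound $S(\sum_xp_x\psi_x)\le H(p)$ for pure, non-orthogonal $\psi_x$. It follows by purifying with a classical register: the entropy of the marginal is at most the entropy of the classical-quantum state $\sum_xp_x\ket x\bra x\otimes\psi_x$, which equals $H(p)$. The other point to check is that the reordering of factors in the block case is compatible with conjugation in fixed product bases. Neither step is deep.
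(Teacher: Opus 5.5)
Your proof is correct and follows the paper's argument. Like the paper, you transfer to the mixed-unitary output $(R\otimes\overline R)(\Omega_N)$ via the pure Stinespring state of $V\otimes\overline V$, merge the $K$ diagonal terms into $\Omega_N$ with weight $1/K$, bound the entropy by the Shannon entropy of the weights, and regroup $\Omega_D=\bigotimes_j\Omega_N$ for the block. One wording point: adjoining a classical register is an extension rather than a purification, and the step it relies on is $S(B)\le S(XB)$ for classical $X$.
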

\begin{proof}
Let $R(\rho)=K^{-1}\sum_iU_i\rho U_i^*$ be complementary to $F$.
Since $(U_i\otimes\overline U_i)\ket{\omega_N}=\ket{\omega_N}$,
the $K$ diagonal pairs give
\[
 (R\otimes\overline R)(\Omega_N)
 =\frac1K\Omega_N+\frac1{K^2}\sum_{i\ne j}
 (U_i\otimes\overline U_j)\Omega_N(U_i\otimes\overline U_j)^*.
\]
This is a mixture of pure states with weights $1/K$ and
$K(K-1)$ copies of $1/K^2$, whose Shannon entropy is
$2\log K-(\log K)/K$. The output entropy is at most this value.
Under the tensor Stinespring isometry $V\otimes\overline V$,
the pure input $\Omega_N$ gives a pure state whose two reduced
outputs are $(R\otimes\overline R)(\Omega_N)$ and
$(F\otimes\overline F)(\Omega_N)$. Their entropies coincide,
proving \eqref{eq:local-bell}.
For the block channels, regroup the tensor factors so that
$\Omega_D=\bigotimes_{j=1}^n\Omega_N$. The output is then a tensor
product of the local Bell outputs, and entropy additivity gives
\eqref{eq:bell}.
\end{proof}

Combining the Bell estimate with \eqref{eq:single-entropy} gives
\begin{equation}
 2\smin(F_n)-\smin(F_n\otimes\overline F_n)
 \ge n\delta_K-4\log\kappa.
 \label{eq:moe}
\end{equation}
For fixed $K$ with $\delta_K>0$ and fixed $\kappa>1$, the finite
blocks constructed in Section~\ref{sec:finite-entropy} therefore have
minimum-output-entropy gaps tending to infinity as $n$ increases.
We now convert these gaps into two-use Holevo gaps of single channels.

\subsection{Converting the entropy gap into a Holevo gap}
\label{sec:conversion}

We apply the standard Weyl-covariant conversion, also used for one
factor in \cite[Section~6]{ZZCW}, to the whole block.
Combine $F_n$ and $\overline F_n$ using a switch register
$\C^2$ with basis $\ket0,\ket1$. Define
$\Psi_n:\M_2\otimes\M_D\to\M_L$ by
\[
 \Psi_n(\rho)=F_n\bigl((\bra0\otimes I_D)\rho(\ket0\otimes I_D)\bigr)
 +\overline F_n\bigl((\bra1\otimes I_D)\rho(\ket1\otimes I_D)\bigr).
\]
The measured switch selects $F_n$ or $\overline F_n$, then is
discarded. This is a variant of the direct-sum reduction of Fukuda--Wolf
\cite[Propositions~1--2]{FW} with the output branch label omitted;
the conjugate-channel switch is also described in
\cite[arXiv version, p.~2]{Hastings}. Complete positivity of the
summands and preservation of the total trace show that $\Psi_n$
is a channel.

The two branches have the same minimum output entropy. Since outputs
of $\Psi_n$ are mixtures of branch outputs, concavity and a fixed
switch value give
\[
 \smin(\Psi_n)=\smin(F_n).
\]
Fixing the two switches to $0$ and $1$ reproduces
$F_n\otimes\overline F_n$ on every joint input, including entangled
ones. Hence
\[
 \smin(\Psi_n^{\otimes2})
 \le\smin(F_n\otimes\overline F_n).
\]
The switch adds one input qubit and preserves the output dimension.
Discarding it is essential for Corollary~\ref{cor:separation}: an
output flag would transmit one bit without error and prevent
$\chi(T)\to0$.

Next, Shor's \emph{Weyl-covariant extension}
\cite[Section~9]{Shor} turns the general bound
$\chi(\Phi)\le\log L-\smin(\Phi)$ for $\Phi:\M_d\to\M_L$ into
an equality. It allows the sender to choose output conjugations
whose uniform average is maximally mixed.

In the basis $\{\ket j:j\in\mathbb Z_L\}$ of $\C^L$, define the shift
and phase unitaries by
\[
 X_L\ket j=\ket{j+1\pmod L},
 \qquad Z_L\ket j=e^{2\pi i j/L}\ket j.
\]
The $L^2$ Weyl unitaries $V_z=X_L^pZ_L^q$, indexed by
$z=(p,q)\in\mathbb Z_L^2$, satisfy the averaging identity
\begin{equation}
 L^{-2}\sum_zV_zYV_z^*=(\Tr Y)I_L/L.
 \label{eq:twirl}
\end{equation}
Add an input register $\C^{L^2}$ with basis
$\{\ket z:z\in\mathbb Z_L^2\}$ to select these unitaries. The extension
$\widetilde\Phi:\M_{L^2}\otimes\M_d\to\M_L$ is
\begin{equation}
 \widetilde\Phi(\rho)=\sum_zV_z
 \Phi\bigl((\bra z\otimes I_d)\rho(\ket z\otimes I_d)\bigr)V_z^*.
 \label{eq:extension}
\end{equation}
The expression inside $\Phi$ is the unnormalized input conditional
on the measured label $z$. The label is discarded after the
corresponding conjugation. The extension thus adds $2\log L$ input
qubits and preserves the output space $\C^L$.

Shor's construction \cite[Section~9]{Shor}, applied with independent
Weyl labels in each of the $m$ tensor factors, gives
\[
 \smin(\widetilde\Phi^{\otimes m})=\smin(\Phi^{\otimes m})
\]
and
\begin{equation}
 \chi(\widetilde\Phi^{\otimes m})
 =m\log L-\smin(\Phi^{\otimes m}),\qquad m\ge1.
 \label{eq:chi-extension}
\end{equation}
Conditioning on the labels gives the entropy identity; uniformly
varying them while fixing a minimizing joint input attains the Holevo
equality. The same argument for distinct channels gives the gap
identity in the introduction.

Apply the extension to $\Psi_n$, with input registers ordered as
Weyl label, switch, and quantum input. Writing
$\ket{zs}=\ket z\otimes\ket s$, the final channel is
\begin{equation}
 \begin{aligned}
 T&=\widetilde\Psi_n:\M_{L^2}\otimes\M_2\otimes\M_D\longrightarrow\M_L,\\
 T(\rho)&=\sum_zV_z\Bigl[
 F_n\bigl((\bra{z0}\otimes I_D)\rho(\ket{z0}\otimes I_D)\bigr)
 +\overline F_n\bigl((\bra{z1}\otimes I_D)\rho
                         (\ket{z1}\otimes I_D)\bigr)\Bigr]V_z^*.
 \end{aligned}
 \label{eq:channel-T}
\end{equation}
Since $D=N^n$ and $L=K^n$, its input dimension is $2N^nK^{2n}$ and
its output dimension is $K^n$. Equation~\eqref{eq:chi-extension} and
the switch identities give
\[
 \chi(T)=\log L-\smin(F_n),
 \qquad \chi(T^{\otimes2})\ge2\log L-
                                  \smin(F_n\otimes\overline F_n).
\]
Substituting \eqref{eq:single-entropy} and \eqref{eq:bell} yields
\begin{equation}
 \chi(T)\le na_K+2\log\kappa,
 \qquad \chi(T^{\otimes2})\ge n\frac{\log K}{K},
 \label{eq:certificate-holevo}
\end{equation}
and hence
\begin{equation}
 \Delta_\chi(T)\ge n\delta_K-4\log\kappa.
 \label{eq:certificate-gap}
\end{equation}
Only the upper bound on $\chi(T)$ and the resulting gap bound
require \eqref{eq:norm-certificate}; the two-use lower bound holds
for every choice of unitaries.

For the matching single-use estimate in Remark~\ref{rem:separation-cost},
we also need a lower bound on $\chi(T)$ that holds for every channel
constructed in \eqref{eq:channel-T}. In factor $j$, let
$R^{(j)}(\rho)=K^{-1}\sum_{i=1}^K U_{i,j}\rho U_{i,j}^*$
be the mixed-unitary channel complementary to $F^{(j)}$.
Since $U_{1,j}^*U_{2,j}$ is unitary, it has a unit eigenvector
$\ket{x_j}$ with
\[
 U_{1,j}^*U_{2,j}\ket{x_j}=e^{i\theta_j}\ket{x_j},
 \qquad
 U_{2,j}\ket{x_j}=e^{i\theta_j}U_{1,j}\ket{x_j}.
\]
Let $\rho_j=\ket{x_j}\bra{x_j}$ and
$\ket{y_{i,j}}=U_{i,j}\ket{x_j}$. The first two output vectors differ
only by a phase, so their rank-one projections coincide. Combining
their weights gives
\[
 R^{(j)}(\rho_j)
 =\frac2K\ket{y_{1,j}}\bra{y_{1,j}}
  +\frac1K\sum_{i=3}^K\ket{y_{i,j}}\bra{y_{i,j}}.
\]
The sum is empty when $K=2$. The entropy of a mixture of pure states
is at most the Shannon entropy of its weights, whether or not the
states are orthogonal. Here the weights are $2/K$ and $K-2$ copies
of $1/K$, so
\[
 \begin{aligned}
 S(R^{(j)}(\rho_j))
 &\le-\frac2K\log\frac2K-\frac{K-2}{K}\log\frac1K\\
 &=\frac2K(\log K-1)+\frac{K-2}{K}\log K
 =\log K-\frac2K.
 \end{aligned}
\]
The pure input $\rho_j$ gives a pure state under the local Stinespring
isometry. Its two reduced outputs are $R^{(j)}(\rho_j)$ and
$F^{(j)}(\rho_j)$, so their entropies agree.

Now use the product input $\rho=\bigotimes_{j=1}^n\rho_j$.
Since $F_n(\rho)=\bigotimes_{j=1}^nF^{(j)}(\rho_j)$, entropy
additivity for product states gives
\[
 \begin{aligned}
 \smin(F_n)
 &\le S(F_n(\rho))
 =\sum_{j=1}^n S(F^{(j)}(\rho_j))\\
 &=\sum_{j=1}^n S(R^{(j)}(\rho_j))
 \le n\left(\log K-\frac2K\right).
 \end{aligned}
\]
Substituting this upper bound into the exact identity
$\chi(T)=n\log K-\smin(F_n)$ yields
\begin{equation}
 \chi(T)\ge\frac{2n}{K}
 \qquad\text{for channels constructed in \eqref{eq:channel-T}.}
 \label{eq:constructed-holevo-lower}
\end{equation}

The construction and Lemma~\ref{lem:cy} give the following qualitative
form of Theorem~\ref{thm:main}.

\begin{proposition}[Qualitative finite-dimensional realization]\label{prop:qualitative}
For every pair of integers $K\ge2$, $n\ge1$ and every $\eta>0$, there
are an integer $N$ and a channel
$T:\M_{2N^nK^{2n}}(\C)\to\M_{K^n}(\C)$ such that
\begin{equation}
 \chi(T)\le na_K+\eta,
 \qquad \chi(T^{\otimes2})\ge n\frac{\log K}{K}.
 \label{eq:qualitative-bounds}
\end{equation}
Consequently,
\[
 \Delta_\chi(T)\ge n\delta_K-2\eta.
\]
\end{proposition}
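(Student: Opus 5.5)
The plan is to assemble the pieces already proved in Section~\ref{sec:entropy}; no new analytic estimate is needed. Given $K\ge2$, $n\ge1$ and $\eta>0$, I will choose $\kappa>1$ with $2\log\kappa\le\eta$, for instance $\kappa=2^{\eta/2}$.

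\emph{Finite unitaries.} Apply the finite-dimensional part of Lemma~\ref{lem:cy} with this $\kappa$. It yields an integer $N$ and tuples $(U_{1,j},\ldots,U_{K,j})\in U(N)^K$, $j\in[n]$, whose block channel $F_n$ satisfies the adjoint norm bound \eqref{eq:norm-certificate} with constant $\kappa c_n$. Any sufficiently large $N$ works; we fix one.

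\emph{Single-use bound.} Lemma~\ref{lem:purity}, applied with $L=K^n$ and $t=\kappa c_n$, gives \eqref{eq:single-entropy}:
\[
 \smin(F_n)\ge n(\log K-a_K)-2\log\kappa\ge n(\log K-a_K)-\eta .
\]

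\emph{The channel and its Holevo bounds.} Build $T$ from these unitaries via \eqref{eq:channel-T}: the switch channel $\Psi_n$ followed by Shor's Weyl extension. Its input dimension is $L^2\cdot2\cdot D=2N^nK^{2n}$ and its output dimension is $L=K^n$, as required. Section~\ref{sec:conversion} gives
\[
 \chi(T)=\log L-\smin(\Psi_n)=\log L-\smin(F_n)\le na_K+\eta .
\]
For two uses, \eqref{eq:chi-extension} with $m=2$, combined with $\smin(\Psi_n^{\otimes2})\le\smin(F_n\otimes\overline F_n)$, gives
\[
 \chi(T^{\otimes2})\ge2n\log K-\smin(F_n\otimes\overline F_n).
\]
The Bell bound \eqref{eq:bell} holds for arbitrary unitaries, so
\[
 \chi(T^{\otimes2})\ge2n\log K-n\left(2\log K-\frac{\log K}{K}\right)=n\,\frac{\log K}{K}.
\]
Subtracting twice the single-use bound yields
\[
 \Delta_\chi(T)\ge n\frac{\log K}{K}-2na_K-2\eta=n\delta_K-2\eta .
\]

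\emph{Main obstacle.} All the substantive work sits in the finite-dimensional realization inside Lemma~\ref{lem:cy}, which is taken as given here. The remaining points to check are the steps taken from the conversion section. First, the identity $\smin(\Psi_n)=\smin(F_n)$ uses concavity of entropy together with $\smin(\overline F_n)=\smin(F_n)$. Second, \eqref{eq:chi-extension} must be applied to $\Psi_n^{\otimes2}$ with independent Weyl labels in each tensor factor, so that the output twirl \eqref{eq:twirl} makes the average output maximally mixed on $\C^{L}\otimes\C^{L}$. Both points are already stated in the paper, so the proof reduces to citing them in order.
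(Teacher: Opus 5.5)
Your proposal is correct and follows the paper's own proof: take $\kappa=2^{\eta/2}$ and obtain the finite unitaries from Lemma~\ref{lem:cy}. The bounds then come from \eqref{eq:certificate-holevo}, which already packages the steps you spell out: \eqref{eq:single-entropy}, the switch identities, the Weyl extension, and the Bell bound \eqref{eq:bell}.
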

\begin{proof}
Given $K,n,\eta$, let $\kappa=2^{\eta/2}>1$.
Lemma~\ref{lem:cy} supplies finite unitaries satisfying
\eqref{eq:norm-certificate} with this $\kappa$.
The channel \eqref{eq:channel-T} has the stated dimensions, and
\eqref{eq:certificate-holevo} gives \eqref{eq:qualitative-bounds}.
\end{proof}

Fixing $K$ with $\delta_K>0$ and $\eta>0$, apply the proposition
separately for each $n$. The resulting channels satisfy
$\Delta_\chi(T)\ge n\delta_K-2\eta\to\infty$; their matrix dimensions
may depend on $n$.

\section{Quantitative finite-dimensional approximation}
\label{sec:approximation}

We prove \eqref{eq:norm-certificate} with $\kappa=(n+1)/(n-1)$ and
$N=\lceil e^{b_Kn}\rceil$. The channel construction and entropy bounds
\eqref{eq:certificate-holevo}--\eqref{eq:certificate-gap} from
Section~\ref{sec:entropy} then give Theorem~\ref{thm:main}.

\begin{proposition}[Uniform finite-matrix estimate]\label{prop:finite}
Let $K\ge2$ be an integer, and let $n_0(K)$ and $b_K$ be as in
Theorem~\ref{thm:main}.
For every integer $n\ge n_0(K)$, let
\[
 N=\lceil e^{b_Kn}\rceil,
 \qquad c_n=\left(\frac{(1+9/K)^n-1}{K^n}\right)^{1/2},
 \qquad \kappa=\frac{n+1}{n-1}.
\]
There exist $n$ unitary tuples
$(U_{1,j},\ldots,U_{K,j})\in U(N)^K$, $1\le j\le n$, whose block
channel $F_n:\M_{N^n}(\C)\to\M_{K^n}(\C)$, defined by
\eqref{eq:gram}--\eqref{eq:tensor-words}, satisfies
\begin{equation}
 \norm{F_n^*(A)}\le\kappa c_n\norm A_2
 \quad\text{for every traceless Hermitian }A\in\M_{K^n}(\C).
 \label{eq:finite}
\end{equation}
\end{proposition}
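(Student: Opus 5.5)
The plan is to make the proof of Lemma~\ref{lem:cy} quantitative: replace the $\xi$-net plus union bound by a single self-adjoint linear test whose norm bounds $\norm{F_n^*(A)}$ uniformly in $A$, and control that one test by an explicit strong-convergence estimate.

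\textbf{Step 1: reduction to one operator.} For traceless Hermitian $A$ with $\norm A_2=1$, write $A=\sum_{\alpha}\alpha_\alpha E_\alpha$ in an orthonormal Hilbert--Schmidt basis $\{E_\alpha\}$ of the traceless Hermitian matrices in $\M_{K^n}$. Here $E_\alpha$ ranges over tensor products of a fixed traceless basis in some factors and the identity in the others. Form the operator
\[
 X_N=\sum_\alpha E_\alpha\otimes F_n^*(E_\alpha),
\]
acting on $\C^{K^{2n}}\otimes\C^{N^n}$; this is the Choi-type combination of all observable tests. Bound $\sup_A\norm{F_n^*(A)}$ by $\norm{X_N}$ up to a slicing identity. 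I expect to use Appendix~\ref{app:operators} here, in the form that contracting the first leg against a unit vector preserves norm bounds. The same construction in the free model gives $X_\lambda$. I would expect $\norm{X_\lambda}\le c_n$ from Collins--Youn \eqref{eq:cy}, or more realistically $\norm{X_\lambda}\le c_n\cdot(\text{a factor absorbed into }\kappa)$. If the coefficient dimension $K^{2n}$ makes this lossy, I would instead keep the matrix coefficients and apply the operator-valued Haagerup form of \cite{CY}.

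\textbf{Step 2: reduction to two generators and linearization.} Each $F^{(j)*}$ involves $U_{r,j}^*U_{s,j}$. Choose $U_{i,j}=\pi_i(V_{1,j},V_{2,j})$, where $\pi_1,\ldots,\pi_K$ are words of length $O(\log K)$ in two letters that freely generate a free subgroup of $\F_2$ (for instance $a^ib a^{-i}$). The free-model norm is then unchanged, since $\F_K\hookrightarrow\F_2$ and regular representations restrict to multiples of regular representations, exactly as in the coset argument in Lemma~\ref{lem:cy}. This makes $X_N$ a polynomial of degree $O(\log K)$ in each of the $2n$ independent Haar unitaries $V_{1,j},V_{2,j}$, with matrix coefficients of size $K^{O(n)}$. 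Next apply the linearization lemmas of Appendix~\ref{app:operators}: enlarge the coefficient space by a factor $e^{O(n\ln K)}$ so that the norm of a degree-one pencil in the $V$'s controls the spectrum of $X_N$. The standard Schur-complement or self-adjoint linearization trick applies here, with explicit constants.

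\textbf{Step 3: quantitative strong convergence.} Apply Lemma~\ref{lem:haar}, the explicit two-generator Bordenave--Collins bound. In tensor form, it says that for a degree-one polynomial with coefficients in $\M_m$,
\[
 \norm{P(V)}\le\norm{P(\lambda)}\bigl(1+C(\ln m)^{a}\,(\ln N)^{-b}\bigr)
\]
with probability bounded away from zero, possibly after a polynomial loss in $m$. Here $m=e^{O(n\ln K)}$ and the effective tensor degree is $n$. Choose $\ln N=b_Kn$ with $b_K=40(7\ln K+2)$, so that the error term is at most $2/(n-1)$. This requires a relative error of order $1/n$, so the condition $n\ge n_0(K)=\lceil256(1+\ln K)^2\rceil$ should absorb the $(1+\ln K)^2$ loss coming from the moment range of \cite[Theorem~5.1]{BC}. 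Finally, undoing the linearization and the slicing gives $\norm{F_n^*(A)}\le\frac{n+1}{n-1}c_n\norm A_2$.

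\textbf{Main obstacle.} The hardest step is Step~3 together with the bookkeeping of Step~2. The coefficient dimension grows exponentially in $n$, yet we need relative error $O(1/n)$, not merely $o(1)$. Bordenave--Collins error terms depend on $\ln$ of the coefficient dimension and on the degree, so the whole argument hinges on keeping the linearized degree independent of $n$ per factor. It must also keep the coefficient dimension at $e^{O(n\ln K)}$, so that $\ln N$ linear in $n$ suffices. If the linearization inflates the norm by a constant rather than $1+O(1/n)$, I would first try to square the operator, use the positivity version $\norm{X}^2=\norm{X^*X}$, and linearize $X^*X$ with an exact spectral correspondence rather than an inequality.
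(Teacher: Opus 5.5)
Your Step~1 has a genuine gap, and the rest of the plan depends on it. Neither reading of your ``Choi-type'' operator both dominates $\sup_{\norm A_2=1}\norm{F_n^*(A)}$ and has reference norm within $1+O(1/n)$ of $c_n$.

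\emph{Column reading.} Take $C=\sum_\alpha\ket{\alpha}\otimes F_n^*(E_\alpha)$. Contracting against a unit vector does return $F_n^*(A)$, but $\norm{C}^2=\norm{\sum_\alpha F_n^*(E_\alpha)^2}$. In the free model, testing on the identity basis vector $\delta_e$ gives $\sum_\alpha\norm{\Gamma_\lambda(E_\alpha)\delta_e}^2=1-K^{-n}$, since each pair $a\ne b$ contributes $K^{-2n}$ at the word $w_a^{-1}w_b$.

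\emph{Tensor reading.} Take $X=\sum_\alpha E_\alpha\otimes F_n^*(E_\alpha)$ on $\C^{K^n}\otimes\C^{N^n}$. The slice returning $F_n^*(A)$ is $(\Tr\otimes\mathrm{id})((A\otimes I)X)$, which costs $\norm A_1$, as large as $K^{n/2}\norm A_2$. Adding the term $E_0=K^{-n/2}I$ turns $\sum_\alpha E_\alpha\otimes E_\alpha$ into the swap. Hence $X_\lambda+K^{-n}I$ is a tensor product of the $n$ single-factor operators $K^{-1}\sum_{r,s}\ket{r}\bra{s}\otimes\lambda(g_s^{-1}g_r)$. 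Each has norm at least $K^{-1/2}$ (test on $\ket{s}\otimes\delta_e$), so $\norm{X_\lambda}\ge K^{-n/2}-K^{-n}$.

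Either way, the best you can extract is roughly $\norm{F_n^*(A)}\lesssim\norm A_2$, which is trivial. Since $c_n\le(1+9/K)^{n/2}K^{-n/2}$, the loss is of order $(K/(1+9/K))^{n/2}$. Collins--Youn \eqref{eq:cy} controls each $\Gamma_\lambda(A)$ separately. No operator-valued Haagerup bound can rescue a combined operator whose free norm really is this large. A supremum over a Euclidean sphere of dimension $K^{2n}-1$ is not the norm of one polynomial with coefficients of size $e^{O(n\ln K)}$; you need a norming net.

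That is what the paper does. It sets $P(\pi)=\bigoplus_{A\in\mathcal T_n}\Gamma_\pi(A)$ over a negation-closed $1/n$-net containing $\pm A_*$, whose norm is exactly the maximum over the tests. The price is coefficient size $J_n\le(1+2n)^{K^{2n}-1}$, so $\ln m_n\le2K^{2n}\ln(1+2n)$. This is exponential in $n$, not your $\ln m=O(n\ln K)$. It is affordable only because Lemma~\ref{lem:haar} sees $m$ solely through $(2m)^{(n+1)/p}$ with $p\approx N^{1/80}/2$. The choice $b_K/80=2\ln K+\gamma_K+1/2$ then gives $K^{2n}e^{\gamma_Kn}/p\le4e^{-n/2}$.

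Your error bookkeeping must also change. Each backward step through Lemma~\ref{lem:linear} multiplies the relative error by $6B$. The first stage contributes $3(1+\norm{P(\lambda)}^{-1})$, which is controlled only by the lower bound $\norm{P(\lambda)}\ge\sqrt2K^{-(n+1)/2}$ supplied by the test $A_*$. The total factor is $\mathfrak T_n\le e^{\gamma_Kn}$. So the linear polynomial must be matched to relative precision $e^{-\gamma_Kn}/n$, which a rate like $(\ln N)^{-b}$ cannot deliver with $\ln N$ linear in $n$.

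The paper then uses Markov's inequality on the expectation bound to fix a realization. The constant $\kappa=(n+1)/(n-1)$ comes from the net-to-sphere step $Z\le(1+1/n)c_n+Z/n$.

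A minor point: $a^iba^{-i}$ has length $2i+1=O(K)$, not $O(\log K)$. The paper gets length $2\lfloor\log_2(K-1)\rfloor+1$ from a binary spanning tree in a covering graph.
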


\paragraph{Polynomials with matrix coefficients.}
The proof uses polynomials whose coefficients are fixed matrices.
For example, if $U,V\in U(N)$, then
\[
 \begin{pmatrix}1&0\\0&0\end{pmatrix}\otimes U
 +\begin{pmatrix}0&1\\0&0\end{pmatrix}\otimes U^*V
 +\begin{pmatrix}0&0\\0&1\end{pmatrix}\otimes V
 =\begin{pmatrix}U&U^*V\\0&V\end{pmatrix}.
\]
The coefficient matrices have size $2$, while the evaluated operator
has size $2N$. Its longest word is $U^*V$, of length two.

In general, for a unitary representation $\pi$ of a group on a Hilbert
space $\mathcal H_\pi$, a polynomial with matrix coefficients is a finite sum
\[
 R(\pi)=\sum_w a_w\otimes\pi(w),\qquad a_w\in\M_m(\C).
\]
It acts on $\C^m\otimes\mathcal H_\pi$. The \emph{coefficient size}
is $m$; the matrices $a_w$ stay fixed when the representation changes.
For $\pi_N$ on $(\C^N)^{\otimes n}$, the evaluated matrix has size
$mN^n$. The same coefficients can instead be evaluated in the regular
representation. A polynomial is \emph{linear} if its words are only
the identity and individual generators or their inverses.
Linearization will shorten the words by enlarging the auxiliary
coefficient space $\C^m$.

\begin{proof}[Proof of Proposition~\ref{prop:finite}]
We combine a finite net of observable tests into one polynomial,
express the generators using two base generators per factor, and
reduce the polynomial to degree one. The quantitative
Bordenave--Collins estimate gives a realization satisfying all tests
simultaneously; approximation from the net proves \eqref{eq:finite}.
The algebraic lemmas are proved in Appendix~\ref{app:operators};
Appendix~\ref{app:finite-threshold} gives the precise random-matrix
estimate and verifies the numerical bounds.

\subsection{Step 1: combine the observable tests}
\label{sec:finite-tests}

The estimate in \eqref{eq:finite} is homogeneous in $A$, so it is enough
to consider traceless Hermitian matrices with $\norm A_2=1$.
On this sphere, let
\[
 A_* =\frac{\ket1\bra2+\ket2\bra1}{\sqrt2}
       \otimes\left(\frac{I_K}{\sqrt K}\right)^{\otimes(n-1)}.
\]
Choose $\mathcal T_n$ maximal under inclusion among negation-closed
subsets of this sphere containing $\pm A_*$ and having pairwise
Hilbert--Schmidt distances $>1/n$. Such a finite set exists by compactness.
It is a $1/n$-net: any point $A$ farther than $1/n$ from the set could
be added together with $-A$, since the set is symmetric and
$\norm{A-(-A)}_2=2>1/n$, contradicting maximality.

The open balls of radius $1/(2n)$ centered at the points of
$\mathcal T_n$ are disjoint and lie in the ball of radius
$1+1/(2n)$ in the real vector space of dimension $K^{2n}-1$.
Comparing volumes gives
\begin{equation}
 J_n=|\mathcal T_n|\le(1+2n)^{K^{2n}-1}.
 \label{eq:net}
\end{equation}
The symmetry will give an exact norm identity in Step~3; the test
$A_*$ will control the first error factor through \eqref{eq:net-norm}.
Fix the tests before choosing any random matrices.

Use the group $\F_K^n$ and words $w_a$ from
Section~\ref{sec:entropy}. Evaluate the map \eqref{eq:gamma-free}
in a unitary representation $\pi$ and combine the tests by setting
\begin{equation}
 \Gamma_\pi(A)=\frac1{K^n}\sum_{a,b}\bra a A\ket b\,
                          \pi(w_a^{-1}w_b),
 \qquad P(\pi)=\bigoplus_{A\in\mathcal T_n}\Gamma_\pi(A).
 \label{eq:net-polynomial}
\end{equation}
This is a polynomial with diagonal matrix coefficients:
\[
 P(\pi)=\sum_{a,b}D_{ab}\otimes\pi(w_a^{-1}w_b),
 \qquad
 D_{ab}=\frac1{K^n}\operatorname{diag}_{A\in\mathcal T_n}
                         \bigl(\bra a A\ket b\bigr)\in\M_{J_n}(\C).
\]
Thus its coefficient size is $J_n$, and its norm is the largest norm
among the tests. Each block is self-adjoint, and the paired tests
$A,-A$ make the spectrum of $P(\pi)$ symmetric about zero.

Let $\lambda$ be the regular representation of $\F_K^n$.
Since every $A\in\mathcal T_n$ is traceless with $\norm A_2=1$,
the free-group bound \eqref{eq:cy} gives
\[
 \norm{P(\lambda)}
 =\max_{A\in\mathcal T_n}\norm{\Gamma_\lambda(A)}\le c_n.
\]
For the lower bound, use the prescribed test $A_*$ and let
$g=(g_1^{-1}g_2,1,\ldots,1)$ and $u=\lambda(g)$.
In \eqref{eq:net-polynomial}, the two off-diagonal entries of the
first factor of $A_*$ contribute $(u+u^*)/(\sqrt2K)$, while each
of its remaining $n-1$ factors $I_K/\sqrt K$ contributes
$K^{-1/2}I$. Thus
\[
 \Gamma_\lambda(A_*)
 =\frac{K^{-(n-1)/2}}{\sqrt2K}(u+u^*).
\]
The element $g$ has infinite order, since $(g_1^{-1}g_2)^r$ has
reduced-word length $2r$ for $r\ge1$. Hence the vectors $\ket{g^k}$
are orthonormal and $u\ket{g^k}=\ket{g^{k+1}}$.
For the unit vector $\xi_r=r^{-1/2}\sum_{k=0}^{r-1}\ket{g^k}$,
shifting the sum leaves $r-1$ overlapping terms, so
\[
 2\ge\norm{u+u^*}
 \ge\langle\xi_r,(u+u^*)\xi_r\rangle
 =\frac{2(r-1)}r\longrightarrow2.
\]
Thus $\norm{u+u^*}=2$. Since $A_*\in\mathcal T_n$, we obtain
\begin{equation}
 \sqrt2\,K^{-(n+1)/2}
 =\norm{\Gamma_\lambda(A_*)}\le\norm{P(\lambda)}\le c_n.
 \label{eq:net-norm}
\end{equation}

The lower bound in \eqref{eq:net-norm} controls error growth in Step~3,
allowing an explicit choice of approximation accuracy and dimension $N$.

For finite-dimensional unitary tuples, let $\pi_N$ denote their
representation on $(\C^N)^{\otimes n}$. Then
$\Gamma_{\pi_N}=\Gamma_N=F_n^*$, in the notation of
Section~\ref{sec:finite-entropy}. Our target is the single inequality
\begin{equation}
 \norm{P(\pi_N)}\le(1+1/n)\norm{P(\lambda)}.
 \label{eq:polynomial-target}
\end{equation}
Since $P(\pi_N)$ is the direct sum of the $F_n^*(A)$ over
$A\in\mathcal T_n$, the target bound and \eqref{eq:net-norm} give
\[
 \max_{A\in\mathcal T_n}\norm{F_n^*(A)}
 =\norm{P(\pi_N)}
 \le(1+1/n)\norm{P(\lambda)}
 \le(1+1/n)c_n.
\]
Step~4 extends this bound from the net to all traceless observables.

\subsection{Step 2: use two base generators}
\label{sec:two-generators}

The quantitative estimate deteriorates with the number of independent
Haar generators. We use only two in each tensor factor by realizing
the $K$ required generators as words in $\F_2=\langle x,y\rangle$.
Lemma~\ref{lem:embedding} gives words
$v_1,\ldots,v_K$, freely generating a copy of $\F_K$, with lengths at most $\ell=2\lfloor\log_2(K-1)\rfloor+1$.
The embedding preserves regular-representation norms, also for matrix
coefficients and tensor products. We henceforth identify $P$ and $w_a$
with their images in $\F_2^n$. If $\lambda_2$ denotes the regular
representation of $\F_2^n$, then $\norm{P(\lambda_2)}=\norm{P(\lambda)}$.

For each factor $j$, take independent Haar unitaries $X_j,Y_j\in U(N)$,
with all $2n$ matrices independent, and set\footnote{For $K=3$, one
can take $v_1=y$, $v_2=x^2$, and $v_3=xyx^{-1}$. Then
$U_{1,j}=Y_j$, $U_{2,j}=X_j^2$, and $U_{3,j}=X_jY_jX_j^*$.
These words are the free generators associated with the two-vertex
graph in Lemma~\ref{lem:embedding}, where $x$ interchanges the
vertices and $y$ has a loop at each vertex. Their lengths are at most
$\ell=3$.}
\[
 U_{i,j}=v_i(X_j,Y_j),\qquad 1\le i\le K,\quad 1\le j\le n.
\]
Here $i$ labels the unitary and $j$ the tensor factor. The $K$ derived
unitaries within one factor need not be independent. The base pairs
give a representation $\pi_N$ of $\F_2^n$;
its restriction to $\F_K^n$ is the representation used in
Step~1 and defines the channel in the proposition.

The words in $P$ have length at most $2\ell n$, where length is the sum
of the reduced-word lengths in the $n$ factors.

\subsection{Step 3: reduce to a linear polynomial}
\label{sec:degree-reduction}

Lemma~\ref{lem:haar} applies to self-adjoint linear polynomials.
We therefore shorten $P$ in stages to such a polynomial $H$, enlarging
the matrix coefficients at each stage. Exact norm identities will
transfer the estimate for $H$ back to $P$.

\paragraph{A minimal example.}
For any two unitaries $U,V$ on the same Hilbert space, consider
the quadratic polynomial
\[
 R_{\mathrm{ex}}=U^*V+V^*U.
\]
Define the linear polynomial
\[
 Q_{\mathrm{ex}}
 =\begin{pmatrix}U+V&0\\0&U-V\end{pmatrix}
 =I_2\otimes U+\begin{pmatrix}1&0\\0&-1\end{pmatrix}\otimes V.
\]
Its coefficients are fixed $2\times2$ matrices. Direct multiplication gives
\[
 Q_{\mathrm{ex}}^*Q_{\mathrm{ex}}
 =\begin{pmatrix}2I+R_{\mathrm{ex}}&0\\0&2I-R_{\mathrm{ex}}\end{pmatrix},
 \qquad
 \norm{Q_{\mathrm{ex}}}^2=2+\norm{R_{\mathrm{ex}}}.
\]
The two diagonal blocks capture both signs of the self-adjoint
operator $R_{\mathrm{ex}}$. Thus words of length two have been replaced
by single letters, the coefficient size has increased from one to two,
and the original norm is recovered by squaring the new norm and
subtracting two. This holds for every choice of $U,V$.

\paragraph{The factorization rule.}
Lemma~\ref{lem:linear} gives the same type of norm identity for a general
polynomial $R$. Choose a finite set $S$ containing the identity such
that every word in $R$ can be written as
\[
 w=g^{-1}h,\qquad g,h\in S.
\]
Thus the words of $R$ lie in $S^{-1}S$. If its coefficients belong to
$\M_m(\C)$ and $B=|S|$, the lemma constructs
\[
 Q(\pi)=\sum_{g\in S}b_g\otimes\pi(g),\qquad
 b_g\in\M_{2mB}(\C),
\]
and a constant $\theta\ge0$ such that
\[
 \norm{Q(\pi)}^2=\norm{R(\pi)}+\theta
 \quad\text{for every }\pi,\qquad
 \theta\le B\norm{R(\lambda_2)}.
\]
The new polynomial uses only the words in $S$. Our task is therefore
to choose $S$ so that its words are shorter than those of $R$.
For a word $w=ab$, the choice $g=a^{-1}$ and $h=b$ gives
$w=g^{-1}h$; this explains why the sets below include inverses.

The lemma uses the Hermitian dilation of $R$, so it must also handle
the words in the adjoint:
\[
 R(\pi)^*=\sum_w a_w^*\otimes\pi(w^{-1}),\qquad
 w=g^{-1}h\ \Longrightarrow\ w^{-1}=h^{-1}g.
\]
The same set $S$ works for these words, with $g$ and $h$ interchanged.
The identity in $S$ allows a word that is already short enough to
pass through a later step as $w=1^{-1}w$.

The norm identity also quantifies the error when we work backward.
Suppose $0\le e\le1$ and
\[
 \norm{Q(\pi_N)}\le(1+e)\norm{Q(\lambda_2)}.
\]
Then
\[
 \begin{aligned}
 \norm{R(\pi_N)}-\norm{R(\lambda_2)}
 &\le\bigl((1+e)^2-1\bigr)
          \bigl(\norm{R(\lambda_2)}+\theta\bigr)\\
 &\le3e(1+B)\norm{R(\lambda_2)}\\
 &\le6Be\,\norm{R(\lambda_2)},
 \end{aligned}
\]
where the last two inequalities use the bound on $\theta$,
$2e+e^2\le3e$, and $B\ge1$.
Thus the coefficient size is multiplied by $2B$ in the forward
construction, and the relative error is multiplied by at most
$6B$ in the backward estimate.
All reference norms encountered below are positive, by
\eqref{eq:net-norm} and the norm identities.

\paragraph{The stages of the construction.}
We use the special form of $P$ for the initial reduction and then
apply the factorization rule repeatedly. Let
\[
 q=\lceil\log_2n\rceil,\qquad s=\lceil\log_2\ell\rceil.
\]
We denote the polynomial after the initial reduction by $Q^{(0)}$,
after $j$ divisions of the tensor factors by $Q^{(j)}$, and after
$i$ further word-shortening steps by $Q^{(q+i)}$.
Let $d_r$ be the coefficient size of $Q^{(r)}$. For backward estimates,
define the nonnegative relative error at this stage by
\[
 e_r=\max\left\{0,\frac{\norm{Q^{(r)}(\pi_N)}}
                         {\norm{Q^{(r)}(\lambda_2)}}-1\right\}.
\]
The final Hermitian dilation gives $H$.
Figure~\ref{fig:linearization} records the words allowed at each stage.
All coefficients and constants are determined by the fixed net and
the regular representation, independently of the sampled matrices.

\begin{figure}[!htbp]
\centering
\begin{tikzpicture}[
 stage/.style={draw=black!45,fill=black!3,rounded corners=2pt,
   text width=7.5cm,minimum height=1.02cm,inner sep=4pt,
   align=center,font=\small},
 action/.style={text width=4.1cm,align=right,font=\small},
 forward/.style={-{Latex[length=1.8mm]},draw=black!70,line width=.6pt},
 backward/.style={-{Latex[length=1.8mm]},draw=blue!50!black,
   dashed,line width=.7pt}
]
\node[stage,fill=blue!4] (paired)
 {\textbf{Paired words in $P$}\\[2pt]
  $w_a^{-1}w_b\qquad (\text{length}\le2\ell n)$};
\node[stage,below=.55cm of paired] (single)
 {\textbf{Individual words in $Q^{(0)}$}\\[2pt]
  $w_a=\prod_{i=1}^n v_{a_i,i}\qquad (\text{length}\le\ell n)$};
\node[stage,below=.55cm of single] (factor)
 {\textbf{One factor per word in $Q^{(q)}$}\\[2pt]
  $1,\ v_{r,i}^{\pm1}\qquad (\text{length}\le\ell)$};
\node[stage,below=.55cm of factor] (linear)
 {\textbf{Linear polynomial $Q^{(q+s)}$}\\[2pt]
  $1,\ x_i^{\pm1},\ y_i^{\pm1}\qquad (\text{length}\le1)$};
\node[stage,fill=blue!4,below=.55cm of linear] (hermitian)
 {\textbf{Self-adjoint linear polynomial $H$}\\[2pt]
  Apply the random-matrix estimate in Step~4};

\draw[forward] (paired.south) -- (single.north);
\draw[forward] (single.south) -- (factor.north);
\draw[forward] (factor.south) -- (linear.north);
\draw[forward] (linear.south) -- (hermitian.north);

\node[action,anchor=east] at
 ($ (paired.south)!.5!(single.north)+(-4.15cm,0) $)
 {Remove paired words};
\node[action,anchor=east] at
 ($ (single.south)!.5!(factor.north)+(-4.15cm,0) $)
 {Split the tensor factors\\$q=\lceil\log_2n\rceil$ steps};
\node[action,anchor=east] at
 ($ (factor.south)!.5!(linear.north)+(-4.15cm,0) $)
 {Halve the word lengths\\$s=\lceil\log_2\ell\rceil$ steps};
\node[action,anchor=east] at
 ($ (linear.south)!.5!(hermitian.north)+(-4.15cm,0) $)
 {Pair with the adjoint\\in off-diagonal blocks\\(norm unchanged)};

\coordinate (returntop) at ($(paired.east)+(.65cm,0)$);
\coordinate (returnbottom) at (returntop |- hermitian.east);
\draw[backward] (hermitian.east) -- (returnbottom)
 -- node[midway,right=.10cm,rotate=90,anchor=south,font=\small,
         text=blue!50!black]
 {Transfer the upper norm bound back to $P$}
 (returntop) -- (paired.east);
\end{tikzpicture}
\caption{Allowed words at each stage of linearization. Here $r$
labels one of the $K$ words, and $x_i,y_i$ are the two base generators
in tensor factor $i$. The forward construction enlarges matrix
coefficients; the return arrow shows how Step~4 transfers the norm
bound from $H$ to $P$. The shortening stage is skipped when $\ell=1$.}
\label{fig:linearization}
\end{figure}

\paragraph{From paired words to individual words.}
For each test $A\in\mathcal T_n$, the block
\[
 \Gamma_\pi(A)=\frac1{K^n}\sum_{a,b}\bra a A\ket b\,
                             \pi(w_a)^*\pi(w_b)
\]
is a sum of paired words. Number the $K^n$ words by $1,\ldots,K^n$
and place their operators in the first column:
\[
 V(\pi)=\sum_a\ket a\bra1\otimes\pi(w_a)
 =\begin{pmatrix}
 \pi(w_1)&0&\cdots&0\\
 \vdots&\vdots&&\vdots\\
 \pi(w_{K^n})&0&\cdots&0
 \end{pmatrix}.
\]
The matrices acting on the row labels are the auxiliary coefficients;
the entries $\pi(w_a)$ act on the representation space.
Since $A$ is Hermitian and $\norm A\le\norm A_2=1$, define
\[
 Q_A(\pi)=K^{-n/2}\bigl((I_{K^n}+A)^{1/2}\otimes I\bigr)V(\pi),
 \qquad Q^{(0)}(\pi)=\bigoplus_{A\in\mathcal T_n}Q_A(\pi).
\]
Multiplication by the fixed square root changes the coefficients
but leaves only the individual words $w_a$. Moreover,
\[
 \begin{aligned}
 Q_A(\pi)^*Q_A(\pi)
 &=K^{-n}V(\pi)^*\bigl((I_{K^n}+A)\otimes I\bigr)V(\pi)\\
 &=\ket1\bra1\otimes\bigl(I+\Gamma_\pi(A)\bigr).
 \end{aligned}
\]
Here the identity term comes from
$K^{-n}\sum_a\pi(w_a)^*\pi(w_a)=I$.
As in the minimal example, both signs are needed to recover a norm.
The net contains $A$ and $-A$, and
$\max\{\norm{I+B},\norm{I-B}\}=1+\norm B$ for self-adjoint $B$.
Consequently,
\begin{equation}
 \norm{Q^{(0)}(\pi)}^2
 =\max_{A\in\mathcal T_n}\norm{I+\Gamma_\pi(A)}
 =1+\norm{P(\pi)}.
 \label{eq:gram-reduction}
\end{equation}
There are $J_n$ blocks, each with coefficient size $K^n$, so
$d_0=J_nK^n$. This special construction has additive constant one
and avoids the extra coefficient factor that a direct application of
Lemma~\ref{lem:linear} would introduce.

For the corresponding backward estimate, if $0\le e\le1$ and
$\norm{Q^{(0)}(\pi_N)}\le(1+e)\norm{Q^{(0)}(\lambda_2)}$, then
\[
 \begin{aligned}
 \norm{P(\pi_N)}-\norm{P(\lambda)}
 &\le\bigl((1+e)^2-1\bigr)(1+\norm{P(\lambda)})\\
 &\le3e(1+\norm{P(\lambda)}).
 \end{aligned}
\]
We used $\norm{P(\lambda_2)}=\norm{P(\lambda)}$.
Dividing by the positive reference norm gives the initial
relative-error factor $3(1+\norm{P(\lambda)}^{-1})$.

\paragraph{From all tensor factors to one factor per word.}
The words in $Q^{(0)}$ are
\[
 w_a=v_{a_1,1}\cdots v_{a_n,n},\qquad
 \pi_N(w_a)=U_{a_1,1}\otimes\cdots\otimes U_{a_n,n}.
\]
Here $a_i\in[K]$ selects a word in factor $i$.
We now reduce the number of factors occurring in each word by
splitting groups of factors into two.

For $n=4$, the groups change as
\[
 \{1,2,3,4\}
 \longrightarrow\{1,2\},\{3,4\}
 \longrightarrow\{1\},\{2\},\{3\},\{4\}.
\]
At the first division, let $S_1$ contain the identity, all words
$v_{r,1}v_{t,2}$ and $v_{r,3}v_{t,4}$ with $r,t\in[K]$, and their
inverses. Then
\[
 w_a=
 \underbrace{v_{a_1,1}v_{a_2,2}}_{g^{-1}}\,
 \underbrace{v_{a_3,3}v_{a_4,4}}_{h},
 \qquad
 g=(v_{a_1,1}v_{a_2,2})^{-1},\quad
 h=v_{a_3,3}v_{a_4,4}
\]
has $g,h\in S_1$. The factorization rule therefore replaces
$Q^{(0)}$ by $Q^{(1)}$ using words from at most two factors.
For the next division, use
\[
 S_2=\{1\}\cup\{v_{r,i}^{\pm1}:r\in[K],\ i\in[4]\}.
\]
For instance,
$v_{a_1,1}v_{a_2,2}=(v_{a_1,1}^{-1})^{-1}v_{a_2,2}$;
the inverse word is handled by interchanging the two elements of
$S_2$. The resulting $Q^{(2)}$ uses at most one factor per word.

For general $n$, bisect every group of size greater than one into
parts whose sizes differ by at most one, retaining singletons.
At level $j$, define
\[
 S_j=\{1\}\cup
 \left\{\prod_{i\in I}v_{a_i,i}^{\varepsilon}:
 I\text{ a group at level }j,\ (a_i)\in[K]^I,\
 \varepsilon\in\{1,-1\}\right\}.
\]
A word on a group that splits is a product $ab$ over its two new
groups; choose $g=a^{-1}$ and $h=b$. Both lie in $S_j$.
Words from distinct factors commute, so
\[
 \left(\prod_{i\in I}v_{a_i,i}\right)^{-1}
 =\prod_{i\in I}v_{a_i,i}^{-1}.
\]
This explains the common sign in $S_j$ and ensures that the same
splitting handles inverse words. For a retained singleton use
$w=1^{-1}w$, and for the identity use $1=1^{-1}1$.
Thus all words of $Q^{(j-1)}$ lie in $S_j^{-1}S_j$.

There are at most $2^j$ groups, with at most $\lceil n/2^j\rceil$
factors in each. Counting choices of words and both signs gives
\begin{equation}
 B_j=|S_j|
 \le1+\underbrace{2^j}_{\text{groups}}
       \underbrace{K^{\lceil n/2^j\rceil}}_{\text{choices per group}}
       \underbrace{2}_{\text{signs}}
 =1+2^{j+1}K^{\lceil n/2^j\rceil},
 \quad 1\le j\le q.
 \label{eq:coordinate-support}
\end{equation}
Applying the rule with $R=Q^{(j-1)}$, $Q=Q^{(j)}$, and $S=S_j$
gives
\[
 \begin{gathered}
 d_j=2B_jd_{j-1},\qquad
 \norm{Q^{(j)}(\pi)}^2
     =\norm{Q^{(j-1)}(\pi)}+\theta_j,\\
 0\le\theta_j\le B_j\norm{Q^{(j-1)}(\lambda_2)}.
 \end{gathered}
\]
The backward calculation above gives
\[
 e_{j-1}\le6B_je_j,\qquad 0\le e_j\le1.
\]
After $q$ divisions, $Q^{(q)}$ uses only
$1$ and $v_{r,i}^{\pm1}$, each acting in at most one factor.

\paragraph{From words within one factor to single letters.}
Each remaining word has length at most $\ell$ in the letters
$x_j,x_j^{-1},y_j,y_j^{-1}$ of one factor. We use the same
factorization rule, now splitting the sequence of letters.

At shortening step $i$, let $r=\lceil\ell/2^i\rceil$ and let
$\mathcal S_i$ contain the identity and all reduced words of length
at most $r$ in any one factor. A reduced word has no adjacent
inverse pair. The current words have length at most
\[
 \left\lceil\frac{\ell}{2^{i-1}}\right\rceil
 \le2\left\lceil\frac{\ell}{2^i}\right\rceil=2r.
\]
Cut any such word into $w=ab$ with $|a|,|b|\le r$. Then
\[
 w=(a^{-1})^{-1}b,\qquad a^{-1},b\in\mathcal S_i.
\]
Both pieces remain in the same factor, and taking an inverse preserves
reduced-word length. Empty pieces are the identity.
Thus this step requires no commutation within a factor.

For example, when $\ell=5$ the length bounds are
\[
 5\longrightarrow3\longrightarrow2\longrightarrow1.
\]
A first-step cut can be
\[
 w=\underbrace{x_jy_jx_j^{-1}}_{a}\,
   \underbrace{y_jx_j}_{b},\qquad
 a^{-1}=x_jy_j^{-1}x_j^{-1}.
\]
The words $a^{-1}$ and $b$ have lengths three and two, respectively,
so both belong to $\mathcal S_1$. Including all reduced words up to
the current bound ensures that the next step also applies to the
newly produced polynomial.

There are $4\cdot3^{t-1}$ reduced words of length $t\ge1$ in one
factor: four choices for the first letter and three thereafter.
Different factors share only the identity. Hence
\[
 C_i=|\mathcal S_i|
 =1+n\sum_{t=1}^r4\cdot3^{t-1}
 =1+2n(3^r-1)\le2n3^r,
\]
or
\begin{equation}
 C_i\le2n3^{\lceil\ell/2^i\rceil},\qquad 1\le i\le s.
 \label{eq:word-counts}
\end{equation}
The factorization with $S=\mathcal S_i$ replaces
$Q^{(q+i-1)}$ by $Q^{(q+i)}$, with
\[
 d_{q+i}=2C_i d_{q+i-1},\qquad
 e_{q+i-1}\le6C_ie_{q+i}
 \quad(0\le e_{q+i}\le1).
\]
After $s$ steps, the length bound is $\lceil\ell/2^s\rceil=1$.
If $\ell=1$, this part of the construction is omitted.

\paragraph{The self-adjoint linear polynomial and its total cost.}
The only words in $Q^{(q+s)}$ are now
$1,x_j^{\pm1},y_j^{\pm1}$ for $j\in[n]$.
Let
\[
 H(\pi)=
 \begin{pmatrix}
 0&Q^{(q+s)}(\pi)\\ Q^{(q+s)}(\pi)^*&0
 \end{pmatrix}.
\]
This Hermitian dilation is self-adjoint and linear, since taking the
adjoint only inverts the single letters. Squaring it gives
\[
 H(\pi)^2=\operatorname{diag}\bigl(
 Q^{(q+s)}(\pi)Q^{(q+s)}(\pi)^*,
 Q^{(q+s)}(\pi)^*Q^{(q+s)}(\pi)\bigr),
 \qquad \norm{H(\pi)}=\norm{Q^{(q+s)}(\pi)}.
\]
Thus the final operation doubles the coefficient size and introduces
no norm error. Writing $m_n=2d_{q+s}$ for the coefficient size of $H$,
we have
\begin{align}
 m_n&\le2J_nK^n\prod_{j=1}^q(2B_j)\prod_{i=1}^s(2C_i),
 \label{eq:coefficient-product}\\
 \mathfrak T_n&:=3(1+\norm{P(\lambda)}^{-1})
             \prod_{j=1}^q(6B_j)\prod_{i=1}^s(6C_i).
 \label{eq:error-product}
\end{align}
These products record the coefficient size of $H$ and the total
relative-error factor when returning from $H$ to $P$.

Let
\begin{equation}
 \gamma_K=\tfrac32\ln K+\tfrac12,\qquad
 \varepsilon_n=\frac{e^{-\gamma_Kn}}n.
 \label{eq:tolerance}
\end{equation}
Appendix~\ref{app:reduction-costs} sums the logarithms of these products
and proves, for every $n\ge n_0(K)$,
\begin{equation}
 \mathfrak T_n\le e^{\gamma_Kn},\qquad
 \ln(2m_n)\le2K^{2n}\ln(1+2n).
 \label{eq:error-size}
\end{equation}
Thus $\mathfrak T_n\varepsilon_n\le1/n$: a relative norm error
$\varepsilon_n$ for $H$ will imply \eqref{eq:polynomial-target}.
Step~4 obtains this estimate for $H$ and checks that each intermediate
error remains at most one, as required by the backward calculation.

\subsection{Step 4: apply the random-matrix estimate}
\label{sec:finish-finite}

For the self-adjoint linear polynomial $H$, Lemma~\ref{lem:haar}
compares its random value $H(\pi_N)$ with its regular-representation
value $H(\lambda_2)$:
\begin{equation}
 \E\norm{H(\pi_N)}\le\norm{H(\lambda_2)}
 (2m_n)^{(n+1)/p}N^{n(n+1)/(2p)}
 p^{3n(n+1)/(2p)}(1+N^{-1/2})^n,
 \label{eq:haar}
\end{equation}
provided $p$ is an even integer with
$2\le p\le2^{-2/5}N^{1/80}$. There is no restriction on the coefficient
size $m_n$, whose contribution to the logarithm of the bound is only
$(n+1)\ln(2m_n)/p$, which is why we can include all the tests in one
polynomial.

Let
\[
 N=\lceil e^{b_Kn}\rceil,\qquad
 p=2\left\lfloor\frac{N^{1/80}}4\right\rfloor.
\]
For $K\ge2$ and $n\ge2$, we have $N^{1/80}\ge K^{7n/2}\ge2^7$.
Our choice is the largest even integer at most $N^{1/80}/2$, so
\[
 2\le\frac{N^{1/80}}4
 \le\frac{N^{1/80}}2-2\le p\le\frac{N^{1/80}}2
 <2^{-2/5}N^{1/80}.
\]
Write the factor multiplying $\norm{H(\lambda_2)}$ in
\eqref{eq:haar} as $e^{E_n}$, where
\begin{equation}
 E_n=\frac{(n+1)\ln(2m_n)}p
 +\frac{n(n+1)\ln N}{2p}
 +\frac{3n(n+1)\ln p}{2p}
 +n\ln(1+N^{-1/2}).
 \label{eq:En}
\end{equation}
The choice $b_K/80=2\ln K+\gamma_K+1/2$ gives
$p^{-1}K^{2n}e^{\gamma_Kn}\le4e^{-n/2}$.
Thus the decay of $1/p$ offsets the coefficient growth and required
accuracy, with an exponential margin to absorb polynomial factors.
Appendix~\ref{app:haar-error} makes this comparison explicit and proves
\begin{equation}
 E_n\le\varepsilon_n/16<\ln(1+\varepsilon_n/2),
 \qquad n\ge n_0(K).
 \label{eq:En-bound}
\end{equation}
Consequently,
$\E\norm{H(\pi_N)}\le(1+\varepsilon_n/2)\norm{H(\lambda_2)}$.
The norm $\norm{H(\lambda_2)}$ is positive: $\norm{P(\lambda)}>0$ by
\eqref{eq:net-norm}, and every factorization preserves positivity of
the reference norm. Markov's inequality now gives
\begin{equation}
 \Prob\{\norm{H(\pi_N)}\le(1+\varepsilon_n)\norm{H(\lambda_2)}\}
 \ge\frac{\varepsilon_n}{2(1+\varepsilon_n)}>0.
 \label{eq:positive-probability}
\end{equation}

Fix a realization in this event. Removing the final Hermitian
dilation preserves the norm, giving relative error at most
$\varepsilon_n$ for $Q^{(q+s)}$.
Undo the shortening steps in the order $s,\ldots,1$, the tensor
splits in the order $q,\ldots,1$, and the initial reduction to $P$.
The error-transfer bounds from Step~3 give the following table.
The induction below checks their hypothesis $e\le1$ at each step.
\begin{center}
\begin{tabular}{@{}ll@{}}
\toprule
Reductions undone & Relative-error bound afterward \\
\midrule
Shortening steps $s,\ldots,i$ ($1\le i\le s$)
 & $\displaystyle\varepsilon_n\prod_{t=i}^s(6C_t)$ \\[8pt]
Tensor splits $q,\ldots,j$ ($1\le j\le q$)
 & $\displaystyle\varepsilon_n\prod_{t=j}^q(6B_t)
                      \prod_{i=1}^s(6C_i)$ \\[8pt]
Initial reduction, returning to $P$
 & $\displaystyle\mathfrak T_n\varepsilon_n$ \\
\bottomrule
\end{tabular}
\end{center}
The last row follows by multiplying the preceding bound at $j=1$
by $3(1+\norm{P(\lambda)}^{-1})$, as in \eqref{eq:error-product}.
If $s=0$, the shortening row is absent and its product equals one.

We justify the table by induction. Initially the error bound is
$\varepsilon_n$, corresponding to the empty product. Suppose the
current bound is $\varepsilon_n$ times the factors already used.
Every factor in \eqref{eq:error-product} is at least one, so their
partial product is at most $\mathfrak T_n$. The error entering the
next reversal therefore satisfies
\[
 e\le\mathfrak T_n\varepsilon_n\le1/n\le1/2<1.
\]
The next error-transfer bound applies and contributes the next
factor in the table, completing the induction.
At the end, using $\norm{P(\lambda_2)}=\norm{P(\lambda)}$, we obtain
\[
 \norm{P(\pi_N)}
 \le(1+\mathfrak T_n\varepsilon_n)\norm{P(\lambda)}
 \le(1+1/n)\norm{P(\lambda)}.
\]
This proves \eqref{eq:polynomial-target} and, by \eqref{eq:net-norm}, gives
$\norm{F_n^*(B)}\le(1+1/n)c_n$ for every $B\in\mathcal T_n$.

To extend the bound to all traceless observables, let
\[
 Z=\sup_{A=A^*,\,\Tr A=0,\,\norm A_2=1}\norm{F_n^*(A)}.
\]
This number is finite by compactness. Given any $A$ in this supremum,
choose $B\in\mathcal T_n$ with $\norm{A-B}_2\le1/n$.
Since $A-B$ is traceless and Hermitian, homogeneity gives
\[
 \norm{F_n^*(A)}
 \le\norm{F_n^*(B)}+\norm{F_n^*(A-B)}
 \le(1+1/n)c_n+Z/n.
\]
Taking the supremum and rearranging yields
\[
 Z\le\frac{n+1}{n-1}c_n=\kappa c_n.
\]
Rescaling $A$ proves \eqref{eq:finite} and completes the proposition.
\end{proof}

\begin{proof}[Proof of Theorem~\ref{thm:main}]
Choose the unitary tuples from Proposition~\ref{prop:finite} and apply
the input switch and Weyl extension from Section~\ref{sec:conversion}.
The resulting channel is $T_n=T$ in \eqref{eq:channel-T}, with input
dimension $2N^nK^{2n}$ and output dimension $K^n$.
The bound \eqref{eq:finite} is exactly the hypothesis
\eqref{eq:norm-certificate} with $\kappa=(n+1)/(n-1)$.
Equations~\eqref{eq:certificate-holevo} and~\eqref{eq:certificate-gap}
therefore give \eqref{eq:main-holevo} and~\eqref{eq:main-gap}.
\end{proof}

\appendix
\section{Algebraic tools for reducing the word length}
\label{app:operators}

The lemmas below give the short words for Step~2 and the
factorizations for Step~3, with bounds on coefficient size and error.

\subsection{Short words in two base generators}

\begin{lemma}\label{lem:embedding}
For every integer $K\ge2$, there is an embedding
$\F_K\hookrightarrow\F_2$ whose generator images have length at most
$2\lfloor\log_2(K-1)\rfloor+1$.
Substituting these images preserves regular-representation norms of
polynomials, including those with matrix coefficients and those on
direct products of the groups.
\end{lemma}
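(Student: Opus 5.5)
We will realize $\F_K$ as a finite-index subgroup of $\F_2$ via covering-space theory, following the hint in the footnote for $K=3$.

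\textbf{Step 1: the graph.} Let $k=\lfloor\log_2(K-1)\rfloor$. Build a connected $2$-regular-in-each-label Schreier graph (a covering of the rose with two petals $x,y$) on $M=K-1$ vertices. Then the corresponding subgroup $H\le\F_2$ has index $M$ and, by the Nielsen--Schreier formula, rank $M(2-1)+1=K$.

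\textbf{Step 2: the spanning tree.} To keep generator lengths short, choose the $x$-edges and $y$-edges so that the graph contains a spanning tree of radius at most $k$ about a base vertex $v_0$. For example, arrange the vertices in a near-complete binary tree of depth $k$, with each vertex connected to its children by one $x$-edge and one $y$-edge. Since $M\le 2^{k+1}-1$, the depth suffices. Then complete the $x$- and $y$-labels to permutations of the vertex set, which is always possible since each label is a partial injection.

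The free generators of $H$ are the Schreier generators $t_u e t_{v}^{-1}$ for non-tree edges $e:u\to v$, where $t_u$ is the tree path from $v_0$ to $u$. Each has length at most $k+1+k=2k+1$. This is exactly the claimed bound $2\lfloor\log_2(K-1)\rfloor+1$. For $K=2$ one takes $M=1$ and the generators $x,y$ themselves, of length $1$.

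The delicate point is checking that exactly $K$ non-tree edges arise and that no generator word fails to be reduced in a way that would break the length count. Reduction can only shorten words, so the count is safe. The edge count is $2M-(M-1)=M+1=K$, as required.

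\textbf{Step 3: norm preservation.} Let $\iota:\F_K\to H\le\F_2$ be the resulting isomorphism. The left regular representation of $\F_2$ restricted to $H$ decomposes, via right cosets, as a direct sum of copies of $\lambda_H$. This is exactly the argument already used in the proof of Lemma~\ref{lem:cy}, with the unitaries $J_t$. Hence for any matrix-coefficient polynomial $\sum_w a_w\otimes\lambda_2(\iota(w))$, the operator is unitarily equivalent to a direct sum of copies of $\sum_w a_w\otimes\lambda_{\F_K}(w)$, so the norms agree.

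For direct products, apply the same coset decomposition to $H^n\le\F_2^n$, or equivalently use $\ell^2(\F_2^n)=\ell^2(\F_2)^{\otimes n}$ and tensor the decompositions. Tensoring with the identity on $\C^m$ preserves unitary equivalence, which handles matrix coefficients.

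The main obstacle is the combinatorial Step 2: exhibiting a concrete labeled tree with both labels used on tree edges while keeping radius $k$ and completing to permutations. I would resolve it with the explicit binary-heap labeling described above.
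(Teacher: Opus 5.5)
Your proposal is correct and follows essentially the same route as the paper. Both build the binary-heap Schreier graph on $K-1$ vertices, complete each label's partial injection to a permutation so the graph covers the two-petal rose, and take the $K$ non-tree edges as free generators of length at most $2\lfloor\log_2(K-1)\rfloor+1$; both then get norm preservation from the right-coset decomposition of the regular representation of $\F_2$ (or $\F_2^n$) restricted to $H$ (or $H^n$).
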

\begin{proof}
We apply the standard covering-graph construction, keeping track of
word lengths. Take $K-1$ vertices $0,\ldots,K-2$ and the binary-tree
edges $i\xrightarrow{x}2i+1$ and $i\xrightarrow{y}2i+2$ whenever the
endpoint exists. Complete each label's partial injective map to a
permutation of the vertex set, as in
\cite[Theorem~6.1]{Stallings}. The resulting connected graph covers
the graph with one vertex and two loops labelled $x,y$.

There are $2(K-1)$ edges, of which $K-2$ belong to the binary spanning
tree. By \cite[Proposition~1A.2]{Hatcher}, the remaining $K$ edges give
a free basis: follow the tree from the root to the start of the edge,
traverse it, and return along the tree. The tree depth is at most
$\lfloor\log_2(K-1)\rfloor$, so each generator word has length at most
$2\lfloor\log_2(K-1)\rfloor+1$. The covering map embeds this free
fundamental group in $\F_2$ \cite[Proposition~1.31]{Hatcher}.

For norm preservation, let $H$ be the image subgroup. The restriction
of the regular representation of $\F_2$ to $H$ is the direct sum of
copies of the regular representation of $H$, on the invariant
subspaces $\ell^2(Hg)$. This decomposition also preserves norms with
matrix coefficients. Applying the same argument to the product
subgroup proves the assertion for direct products.
\end{proof}

\subsection{Factorization and error transfer}

We use the finite-set factorization underlying
\cite[Lemma~8.1]{BC}, together with the standard Hermitian dilation
recalled in \cite[Remark~7.4]{Pisier14}.
The following positive-matrix construction gives the bounds on the
additive constant and error transfer needed in
Section~\ref{sec:degree-reduction}.

\begin{lemma}[Factorization over a finite set]\label{lem:linear}
Let $\mathcal G$ be a group, and let $\lambda_{\mathcal G}$ be its
left regular representation on $\ell^2(\mathcal G)$, defined by
$\lambda_{\mathcal G}(g)\ket h=\ket{gh}$.
Let $S\subset\mathcal G$ be finite with $1\in S$, and let $B=|S|$. For
\[
 P(\pi)=\sum_{w\in S^{-1}S}a_w\otimes\pi(w),\qquad a_w\in\M_m(\C),
\]
there are coefficients $b_g\in\M_{2mB}(\C)$ and a scalar $\theta\ge0$,
independent of the unitary representation $\pi$, such that
$Q(\pi)=\sum_{g\in S}b_g\otimes\pi(g)$ satisfies
\begin{equation}
 \norm{P(\pi)}=\norm{Q(\pi)}^2-\theta
 \quad\text{for every }\pi,
 \label{eq:linear-id}
\end{equation}
and
\begin{equation}
 \theta\le B\norm{P(\lambda_{\mathcal G})}.
 \label{eq:theta}
\end{equation}
In particular, if $\norm{P(\lambda_{\mathcal G})}>0$ and
$\norm{Q(\pi)}\le(1+e)\norm{Q(\lambda_{\mathcal G})}$ for $0\le e\le1$, then
\[
 \norm{P(\pi)}\le(1+6Be)\norm{P(\lambda_{\mathcal G})}.
\]
\end{lemma}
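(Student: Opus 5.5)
First pass to the Hermitian dilation. Set
\[
 \widehat P(\pi)=\begin{pmatrix}0&P(\pi)\\P(\pi)^*&0\end{pmatrix}
 =\sum_{w}\hat a_w\otimes\pi(w),\qquad \hat a_w\in\M_{2m}(\C).
\]
It is self-adjoint, satisfies $\norm{\widehat P(\pi)}=\norm{P(\pi)}$, and by the remark in Section~\ref{sec:degree-reduction} its words still lie in $S^{-1}S$: if $w=g^{-1}h$, then $w^{-1}=h^{-1}g$. Fix once and for all a factorization $w=g_w^{-1}h_w$ with $g_w,h_w\in S$ for each word that occurs. Arrange the coefficients into a self-adjoint block matrix $M\in\M_B(\M_{2m})$, indexed by $S\times S$, with $M_{g,h}$ equal to the sum of the $\hat a_w$ over those $w$ whose chosen factorization is $(g,h)$. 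To keep $M$ self-adjoint, choose the factorization of $w^{-1}$ to be the swapped one.

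Introduce the column operator $V(\pi)=\sum_{g\in S}\ket g\otimes I_{2m}\otimes\pi(g)$. Then
\[
 V(\pi)^*(M\otimes I)V(\pi)=\sum_{g,h}M_{g,h}\otimes\pi(g)^*\pi(h)=\widehat P(\pi).
\]
To obtain a square, add a scalar shift $\theta I$ to $M$ with $\theta\ge\norm M$, so that $M+\theta I\ge0$. Since each $\pi(g)$ is unitary, $V^*V=B\,I$, and therefore
\[
 V^*\bigl((M+\theta I)\otimes I\bigr)V=\widehat P(\pi)+B\theta I .
\]
Put $Q(\pi)=\bigl((M+\theta I)^{1/2}\otimes I\bigr)V(\pi)$, with coefficients $b_g=(M+\theta I)^{1/2}(\ket g\otimes I_{2m})$. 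These have size $2mB\times 2m$; pad with zero columns to get size $2mB$. This gives
\[
 \norm{Q(\pi)}^2=\norm{\widehat P(\pi)+B\theta I}.
\]
The spectrum of $\widehat P$ is symmetric about zero, because conjugating by $\mathrm{diag}(I,-I)$ negates it. Hence $\norm{\widehat P+cI}=\norm{\widehat P}+c$ for every $c\ge0$, which yields \eqref{eq:linear-id} with the constant $B\theta$ in place of $\theta$.

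The main obstacle is the bound \eqref{eq:theta}. The shift must dominate $\norm M$, and there is no a priori reason for $\norm M$ to be controlled by $\norm{P(\lambda_{\mathcal G})}$. The fix is to exploit the freedom in $M$: any $M'$ with $V^*M'V=\widehat P$ for every $\pi$ works equally well. I would choose a positivity certificate adapted to the regular representation, following the operator-space argument behind \cite[Lemma~8.1]{BC} and the Haagerup/Pisier factorization \cite[Remark~7.4]{Pisier14}. The target is the existence of $M'$ with $M'+c I\ge0$, where $c=\norm{\widehat P(\lambda_{\mathcal G})}$.

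Concretely, $cI-\widehat P(\lambda_{\mathcal G})$ is a positive element of the reduced group C*-algebra tensored with $\M_{2m}$, supported on $S^{-1}S$. The step I am least sure of is representing it as $V^*\Lambda V$ with a positive $\Lambda\in\M_B(\M_{2m})$ supported on $S\times S$. I would try this first by a Fej\'er--Riesz/sum-of-squares argument; if exact representation fails, I would allow slack $\epsilon$ and take limits, using compactness of the relevant coefficient set. Given such a $\Lambda$, replace $M$ by $-\Lambda+cI/B$ after diagonal normalization, and use the shift $\theta'$ accordingly, giving a total additive constant at most $B\norm{P(\lambda_{\mathcal G})}$. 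If this fails, I would fall back on the cruder bound $\theta\le\sum\norm{\hat a_w}$, but that would not give \eqref{eq:theta}.

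Finally, the error transfer is the computation already in Section~\ref{sec:degree-reduction}. Squaring $\norm{Q(\pi)}\le(1+e)\norm{Q(\lambda_{\mathcal G})}$ gives
\[
 \norm{P(\pi)}+\theta\le(1+3e)\bigl(\norm{P(\lambda_{\mathcal G})}+\theta\bigr),
\]
and \eqref{eq:theta} together with $1+B\le2B$ yields the factor $1+6Be$.
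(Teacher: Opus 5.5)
\textbf{There is a genuine gap: the bound \eqref{eq:theta}, which is the real content of the lemma, is not established, and the route you propose for it cannot work.}

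Your overall architecture is sound and is the paper's. The Hermitian dilation is right. Your factorization $Q(\pi)=(\Lambda^{1/2}\otimes I)V(\pi)$ is exactly the paper's $b_g=G^{1/2}(E_{g,1}\otimes I_{2m})$, up to padding. The symmetric-spectrum identity $\norm{\widehat P+cI}=\norm{\widehat P}+c$ and the final error transfer are also correct.

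\emph{Why the certificate fails.} You want $cI-\widehat P=V^*(\Lambda\otimes I)V$ with $\Lambda\ge0$ supported on $S\times S$ and $c=\norm{\widehat P(\lambda_{\mathcal G})}$. Because $\lambda_{\mathcal G}$ is injective on finitely supported coefficient sums, such an identity is an identity of coefficients. It therefore holds in \emph{every} unitary representation $\pi$, giving $\widehat P(\pi)\le cI$. By the spectral symmetry, this would mean $\norm{P(\pi)}\le\norm{P(\lambda_{\mathcal G})}$ for all $\pi$.

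This is false for non-amenable groups. Take $\mathcal G=\F_2$, $S=\{1,x,y\}$, $P=x+x^{-1}+y+y^{-1}$. Then $\norm{P(\lambda_{\mathcal G})}=2\sqrt3$, but $\norm{P(\pi)}=4$ in the trivial representation.

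The same reasoning shows that in any construction of this Gram type, the additive constant is at least $\sup_\pi\norm{P(\pi)}$. So the factor $B$ in \eqref{eq:theta} is not slack to be removed; it must absorb the gap between the reduced and full norms. Any Gram or sum-of-squares certificate is automatically positive in every representation, so positivity in $C^*_r(\mathcal G)$ alone can never be certified this way. Adding $\epsilon$-slack and passing to limits does not bridge this. Separately, your uniform shift $\theta I_B$ costs a factor $B$ on its own, since $V^*V=BI$.

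\emph{What is missing.} You need an explicit, representation-independent positive Gram matrix whose constant term is controlled by the coefficients. The a priori control you said was lacking does exist. Compressing $\widehat P(\lambda_{\mathcal G})^2$ to the identity vector $\ket1\in\ell^2(\mathcal G)$ gives $\sum_w\hat a_w^*\hat a_w\le\rho^2I$, with $\rho=\norm{P(\lambda_{\mathcal G})}$. For your naive $M$, a column-by-column estimate then gives $\norm M\le\sqrt B\rho$, but only an additive constant $B^{3/2}\rho$.

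The paper instead handles each inverse pair $w\ne w^{-1}$, writing $w=g^{-1}h$ with $g,h\in S$, by inserting the block
\[
 \begin{pmatrix}|\hat a_w^*|&\hat a_w\\ \hat a_w^*&|\hat a_w|\end{pmatrix}\ge0
\]
in rows and columns $g,h$; it uses half of this block when $w=w^{-1}\ne1$.
\begin{itemize}
\item The off-diagonal blocks reproduce the non-constant part of $\widehat P$.
\item The diagonal blocks are evaluated at $\pi(g^{-1}g)=I$, so they contribute only the constant $D=\sum_{w\ne1}|\hat a_w|$.
\item The positive correction $\theta I+\hat a_1-D$, with $\theta=\norm{\sum_w|\hat a_w|}$, is added \emph{only at the identity block}. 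This gives $Q^*Q=E_{1,1}\otimes(\widehat P(\pi)+\theta I)$.
\item The operator Cauchy--Schwarz inequality $(\sum_w|\hat a_w|)^2\le|S^{-1}S|\sum_w\hat a_w^*\hat a_w$ then yields $\theta\le\sqrt{|S^{-1}S|}\,\rho\le B\rho$.
\end{itemize}
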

\begin{proof}
Let $W=S^{-1}S$, let $\rho=\norm{P(\lambda_{\mathcal G})}$, and form the
Hermitian dilation
\[
 H(\pi)=\begin{pmatrix}0&P(\pi)\\P(\pi)^*&0\end{pmatrix}
       =\sum_{w\in W}c_w\otimes\pi(w).
\]
Thus $c_{w^{-1}}=c_w^*$, and the largest spectral value of $H(\pi)$ is
$\norm{P(\pi)}$.
Compressing $H(\lambda_{\mathcal G})^*H(\lambda_{\mathcal G})$ to the
identity group vector gives
\begin{equation}
 \sum_{w\in W}c_w^*c_w\le\rho^2I.
 \label{eq:coeff-bound}
\end{equation}
Write $|c|=(c^*c)^{1/2}$.

We first construct a positive block matrix $G_0\in\M_B(\M_{2m})$
whose evaluation gives the nonconstant terms of $H$ and a controlled
constant term. For each inverse pair $\{w,w^{-1}\}$ with
$w\ne1$ and $w\ne w^{-1}$, choose one representative $w$ and
$g,h\in S$ such that $g^{-1}h=w$. Insert in block rows and columns
$g,h$ the matrix
\[
 \begin{pmatrix}|c_w^*|&c_w\\c_w^*&|c_w|\end{pmatrix}\ge0.
\]
Indeed, if $c_w=u|c_w|$ is its polar decomposition, this matrix equals
$\binom{u}{I}|c_w|(u^*\ \ I)$.
Its off-diagonal blocks give the two terms indexed by $w,w^{-1}$,
and its diagonal blocks contribute $|c_w^*|+|c_w|$ to the constant
term. If $w\ne1$ and $w=w^{-1}$, then $c_w=c_w^*$; choose
$g^{-1}h=w$ and insert half of the same matrix. The two off-diagonal
blocks then give one copy of $c_w\otimes\pi(w)$, and the diagonal
blocks contribute $|c_w|$.
Adding all these positive matrices defines $G_0\ge0$ and gives
\[
 \sum_{g,h\in S}(G_0)_{g,h}\otimes\pi(g^{-1}h)
 =H(\pi)+(D-c_1)\otimes I,
 \qquad D=\sum_{w\in W\setminus\{1\}}|c_w|.
\]

Let $\theta=\norm{D+|c_1|}$. Since
\[
 \theta I+c_1-D\ge |c_1|+c_1\ge0,
\]
we may add this matrix at the identity block to obtain
\[
 G=G_0+E_{1,1}\otimes(\theta I+c_1-D)\ge0.
\]
Here the matrix units are indexed by $S$.
Define
\[
 b_g=G^{1/2}(E_{g,1}\otimes I_{2m}),\qquad
 Q(\pi)=\sum_{g\in S}b_g\otimes\pi(g).
\]
Direct multiplication yields
\begin{align*}
 Q(\pi)^*Q(\pi)
 &=E_{1,1}\otimes
   \left(\sum_{g,h\in S}G_{g,h}\otimes\pi(g^{-1}h)\right)\\
 &=E_{1,1}\otimes\bigl(H(\pi)+\theta I\bigr).
\end{align*}
This operator is positive and its largest spectral value is
$\norm{P(\pi)}+\theta$, proving \eqref{eq:linear-id}.

To bound $\theta$, apply Cauchy--Schwarz to the vectors $|c_w|x$.
For every vector $x$, this gives the operator inequality
\[
 \left(\sum_{w\in W}|c_w|\right)^2
 \le |W|\sum_{w\in W}|c_w|^2
 =|W|\sum_{w\in W}c_w^*c_w
 \le |W|\rho^2I.
\]
Consequently,
\[
 \theta=\left\|\sum_{w\in W}|c_w|\right\|
 \le\sqrt{|W|}\,\rho\le B\rho,
\]
because $|W|=|S^{-1}S|\le B^2$. This proves \eqref{eq:theta}.
Finally, under the hypotheses for the relative-error bound, the two
norm identities and $(1+e)^2-1\le3e$ give
\[
 \begin{aligned}
 \norm{P(\pi)}-\rho
 &\le3e(\rho+\theta)\\
 &\le3(1+B)e\rho\le6Be\rho,
 \end{aligned}
\]
where the last inequality uses $B\ge1$.
\end{proof}

\section{The random-matrix estimate and parameter bounds}
\label{app:finite-threshold}

\subsection{The two-generator Haar estimate}

\begin{lemma}[Explicit two-generator Haar estimate]
\label{lem:haar}
Let $H$ be a self-adjoint linear polynomial on $\F_2^k$, with
deterministic coefficients in $\M_m(\C)$, where $m,k\ge1$.
Let $\pi_N$ be the representation obtained by replacing the two
generators in each factor by independent Haar unitaries in $U(N)$
acting on separate tensor factors, with all $2k$ matrices independent.
Let $\lambda_2$ be the regular representation of $\F_2^k$. If
\begin{equation}
 2\le p\le2^{-2/5}N^{1/80},\qquad
 p\ \text{an even integer},
 \label{eq:haar-hypotheses}
\end{equation}
then
\[
 \E\norm{H(\pi_N)}\le\norm{H(\lambda_2)}
 (2m)^{(k+1)/p}N^{k(k+1)/(2p)}
 p^{3k(k+1)/(2p)}(1+N^{-1/2})^k.
\]
\end{lemma}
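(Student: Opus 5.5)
We prove Lemma~\ref{lem:haar} by the moment method of Bordenave--Collins, following the proof of \cite[Lemma~9.3]{BC}. Two inputs are taken from that paper and made explicit. The first is the one-factor moment estimate \cite[Theorem~5.1]{BC}, used in the range $p\le2^{-2/5}N^{1/80}$. The second is a tensorization argument that handles the $k$ factors one at a time.

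The starting point is that for even $p$,
\[
 \norm{H(\pi_N)}^p\le\Tr\bigl(H(\pi_N)^p\bigr),
\]
where $\Tr$ is the unnormalized trace on $\C^m\otimes(\C^N)^{\otimes k}$. Jensen's inequality then gives
\[
 \E\norm{H(\pi_N)}\le\bigl(\E\Tr H(\pi_N)^p\bigr)^{1/p}.
\]
It therefore suffices to show
\[
 \E\Tr H(\pi_N)^p\le 2m\,(2m)^{k}\,N^{k(k+1)/2}\,p^{3k(k+1)/2}\,(1+N^{-1/2})^{kp}\,\norm{H(\lambda_2)}^p .
\]
Taking the $p$-th root of this bound produces exactly the factors $(2m)^{(k+1)/p}N^{k(k+1)/(2p)}p^{3k(k+1)/(2p)}(1+N^{-1/2})^k$ in the statement.

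The core step replaces one tensor factor at a time. For $r=0,\dots,k$, let $\pi^{(r)}$ evaluate the first $r$ factors by the Haar pairs and the remaining $k-r$ factors in the regular representation of $\F_2$. We regard $H(\pi^{(r)})$ as a linear polynomial in the two generators of factor $r+1$. Its coefficients lie in $\M_m\otimes B(\ell^2(\F_2^{k-r-1}))\otimes\M_N^{\otimes r}$, and they commute with that factor. By independence and Fubini, we condition on the other factors. Applying \cite[Theorem~5.1]{BC} with operator coefficients compares the trace moment at level $r+1$ with the moment at level $r$. The coefficients are infinite-dimensional, so we apply the theorem after compressing to a finite section, or equivalently in the $\tau\otimes\Tr$ form of the theorem, and then let the section grow. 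Each step loses a factor of the form $(C\,N^{r}p^{3r}\,2m)\,(1+N^{-1/2})^p$. In this factor, $N^{r}$ and $p^{3r}$ come from the effective coefficient dimension or rank controlled via the earlier factors, and $(1+N^{-1/2})^p$ is the $1+O(p^{c}/N^{1/2})$-type error, valid in the range $p\le2^{-2/5}N^{1/80}$. Multiplying the $k$ steps gives $\prod_{r=1}^k N^{r}p^{3r}=N^{k(k+1)/2}p^{3k(k+1)/2}$. The final term $\tau(H(\lambda_2)^p)\le\norm{H(\lambda_2)}^p$, taken with the trace over $\C^m$, contributes the $(2m)$ powers. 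The bookkeeping of $2m$ versus $m$ allows for a possible Hermitian doubling.

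The main obstacle is extracting explicit constants from \cite{BC}. Their Theorem~5.1 is stated with unspecified constants, and Lemma~9.3 handles tensor products only asymptotically. The plan is to redo their bound by tracking each constant. Concretely, we would bound the difference between the Haar moment $\E\,\mathrm{tr}\,H^p$ and the free moment by (number of non-backtracking path configurations)$\times$(Weingarten error). We would use $|{\rm Wg}|$ bounds valid for $p^{\,c}\le N^{1/40}$, which is where the exponents $1/80$ and $2^{-2/5}$ enter. We would also check that the per-factor exponent is $r$, growing linearly, rather than a constant. A fallback, if this exponent cannot be matched, is a cruder bound $N^{k}$ per step. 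That would give $N^{k^2/p}$, which is still absorbable in the later application, but it would not reproduce the stated $N^{k(k+1)/(2p)}$ exactly.
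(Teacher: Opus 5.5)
Your outline (bound $\E\norm{H(\pi_N)}$ by a $p$-th moment, then replace the Haar pairs by free generators one factor at a time while conditioning on the others) has the right shape, but the step that makes the iteration work is missing. The one-pair estimate from \cite[Theorem~5.1]{BC} is additive, and its error is measured by the \emph{operator norm} of the free evaluation. In normalized $p$-norms it reads $\E\norm{R_N}_p^p\le\norm{R_{\F_2}}_p^p+N^{-1/2}\norm{R_{\F_2}}^p$. So it does not compare the moment at level $r+1$ with the moment at level $r$, as you assert. After each replacement you hold an operator norm $\norm{H_j}$, and you must turn it back into a normalized $p$-norm before the next pair can be integrated. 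The paper does this with \cite[Lemma~9.4]{BC}: a self-adjoint linear polynomial with coefficients in $\M_r(\C)$, evaluated in the regular representation of $\F_2^j$, satisfies $\norm Q\le(2rp^{3j})^{1/p}\norm Q_p$. It is applied with $r=mN^{k-j}$ after conditioning on the $k-j$ remaining Haar pairs. Together with $\norm{H_0}\le(mN^k)^{1/p}\norm{H_0}_p$, this produces the factors $2m$, $N^{k(k+1)/2}$ and $p^{3k(k-1)/2}\le p^{3k(k+1)/2}$. The power of $N$ counts the Haar factors still sitting in the coefficients, while the power of $p^3$ counts the factors already made free. Your ``$N^rp^{3r}$ from the effective coefficient dimension or rank'' gives both the same index and names no inequality producing either. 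Your unspecified per-step constant $C$ would also leave a factor $C^{k/p}$ that the statement does not contain.

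The other missing ingredient is the explicit one-pair bound $\E\norm{R_N}_p\le(1+N^{-1/2})\norm{R_{\F_2}}$ in the range \eqref{eq:haar-hypotheses}. This is the substantive content of the lemma, and you defer it (``the plan is to redo their bound''). The range is not a Weingarten range. Condition \eqref{eq:haar-hypotheses} is equivalent to $N\ge2^{32}p^{80}$, i.e.\ $256p^{20}\le N^{1/4}$, which is exactly the hypothesis of \cite[Theorem~5.1]{BC} for $d=2$. The constant comes from running that proof explicitly:
\begin{itemize}
\item the entry-moment prefactor $1+3r^{7/2}/N^2\le5/2$ of \cite[Lemma~5.4]{BC} gives $C_0=25/4$;
\item the word-length terms satisfy $|D_t|\le C_0(4t^3)^4p^{11}\rho^pN^{-1}e^{t^2N^{-1/4}}\sum_a\alpha^a\sum_b\beta^b$, with $\alpha=128p^{20}N^{-1/4}\le1/2$ and $\beta=4096p^{38}N^{-1}<1/2$;
\item summing over $t\le p$ gives $6400e\,p^{24}N^{-1}\rho^p\le N^{-1/2}\rho^p$.
\end{itemize}
Without this computation nothing excludes a constant $c>1$ in $(1+cN^{-1/2})^k$. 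Your stated fallback of $N^k$ per step would not give the claimed exponent either, so as written the proposal does not prove the lemma. A smaller point: the coefficient algebra must carry a faithful trace, so use tensor powers of $C^*_r(\F_2)$ with the canonical trace rather than $B(\ell^2)$. Compressing to finite sections is not multiplicative and would change the moments.
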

\begin{proof}
The upper bound in \cite[Lemma~9.3]{BC}, specialized to $d=2$,
has $(1+cN^{-1/2})^k$ for a universal numerical constant $c$.
We follow its proof, retaining the larger moment range of
\cite[Theorem~5.1]{BC} and tracking constants to obtain $c=1$.
Only the self-adjoint, two-generator case is needed here.

\emph{The estimate for one pair.}
To allow iteration over tensor factors, let $\mathcal A$ be a unital
$C^*$-algebra with a normalized faithful tracial state
$\tau_{\mathcal A}$. Let $U_1,U_2$ be independent Haar
unitaries in $U(N)$, and let
\[
 R_N=a_0\otimes I_N+
       \sum_{i=1}^2(a_i\otimes U_i+a_i^*\otimes U_i^*),
 \qquad a_0=a_0^*,\quad a_i\in\mathcal A,
\]
with deterministic coefficients. Let $R_{\F_2}$ be the same
polynomial evaluated at the regular generators and let
$\rho=\norm{R_{\F_2}}$.
Throughout the proof, $\norm R_p=\tau(|R|^p)^{1/p}$ uses the
appropriate normalized product trace: normalized matrix traces
on matrix factors and the canonical trace
$\tau_{\F_2}(\lambda(g))=\mathbf1_{\{g=1\}}$ on a free-group factor.
Thus the $p=2$ norm here is normalized, unlike the Hilbert--Schmidt
norm used in the main text.

The moment range is equivalent to $N\ge2^{32}p^{80}$ with
$p\ge2$; in particular, $N\ge2^{112}$. Consequently,
\[
\begin{aligned}
 256p^{20}&\le N^{1/4},&
 \alpha:=128p^{20}N^{-1/4}&\le\tfrac12,\\
 \beta:=4096p^{38}N^{-1}&\le2^{-20}p^{-42}<\tfrac12,&
 p^2N^{-1/4}&\le2^{-8}p^{-18}<1.
\end{aligned}
\]
The first inequality is precisely the hypothesis of
\cite[Theorem~5.1]{BC} for $d=2$ and moment order $p$.
The quantities $\alpha$ and $\beta$ will bound the two geometric
ratios in that proof.

For a single Haar matrix occurring $r$ times, the entry-moment
estimate in \cite[proof of Lemma~5.4, p.~26]{BC} has explicit
prefactor $1+3r^{7/2}/N^2\le5/2$.
Its hypothesis $2r^{7/2}\le N^2$ follows from $r\le p$ and
$N\ge2^{32}p^{80}$. Odd occurrence counts give zero by Haar phase
invariance, and an absent matrix contributes one. Each matrix and
its adjoint belong to the same independent family, so the product
of the two prefactors is at most
\[
 C_0=(5/2)^2=25/4.
\]
Thus $C_0$ is valid in the probabilistic weight bound of
\cite[Corollary~5.5]{BC} for these two matrices.

To identify the terms in the moment expansion, write
\[
 R_{\F_2}^{\,p}=\sum_{|g|\le p}b_g\otimes\lambda(g),
 \qquad
 D_t=\sum_{|g|=t}\tau_{\mathcal A}(b_g)\,
                \E\operatorname{tr}_N(U(g)),
\]
where $|g|$ is reduced-word length, $U(g)$ is the word evaluated
at $U_1,U_2$, and $\operatorname{tr}_N=N^{-1}\Tr$.
The identity term is $\norm{R_{\F_2}}_p^p$, so
\[
 \E\norm{R_N}_p^p-\norm{R_{\F_2}}_p^p=\sum_{t=1}^p D_t.
\]
This is the word-length decomposition used in
\cite[equation~(21)]{BC}; odd $t$ gives zero by Haar phase invariance.
The path counts in \cite[Lemmas~5.3 and~5.8]{BC} and the
operator bound in \cite[Lemma~5.9]{BC} have explicit constants.
Substituting $d=2$, moment order $p$, and word length $t\le p$
in \cite[proof of Theorem~5.1, p.~33]{BC} gives
\[
 |D_t|\le
 \frac{C_0(4t^3)^4p^{11}\rho^p}{N}
 e^{t^2N^{-1/4}}
 \left(\sum_{a\ge0}\alpha^a\right)
 \left(\sum_{b\ge0}\beta^b\right).
\]
Here the two ratios before replacing $t$ by $p$ are
\[
\begin{aligned}
 (4t^3)^3(2t^2p^9N^{-1/4})
   &=128t^{11}p^9N^{-1/4}\le\alpha,\\
 (4t^3)^6(t^2p^{18}/N)
   &=4096t^{20}p^{18}/N\le\beta.
\end{aligned}
\]
Each geometric series is at most $2$, and the exponential is at
most $e$. Summing over $1\le t\le p$ and using
$\sum_{t=1}^p t^{12}\le p^{13}$ therefore yields
\[
\begin{aligned}
 \left|\E\norm{R_N}_p^p-\norm{R_{\F_2}}_p^p\right|
 &\le\sum_{t=1}^p|D_t|\\
 &\le\frac{4eC_0\,4^4p^{11}\rho^p}{N}
                  \sum_{t=1}^p t^{12}\\
 &\le\frac{6400e\,p^{24}}{N}\rho^p
 \le N^{-1/2}\rho^p.
\end{aligned}
\]
For the last step, $\sqrt N\ge2^{16}p^{40}$, so
\[
 \frac{6400e\,p^{24}}{\sqrt N}
 \le\frac{6400e}{2^{16}p^{16}}<1,
\]
since $6400e<2^{16}$ and $p\ge2$.
Since $\norm{R_{\F_2}}_p\le\rho$, Jensen's inequality now gives
\[
 \E\norm{R_N}_p
 \le\rho(1+N^{-1/2})^{1/p}
 \le\rho(1+N^{-1/2}).
\]

\emph{Replacing the pairs one at a time.}
For $0\le j\le k$, let $H_j$ be $H$ with the last $j$ pairs
evaluated at their regular generators and the other pairs at their
Haar matrices. Thus $H_0=H(\pi_N)$ and $H_k=H(\lambda_2)$.
At stage $j\ge1$, condition on the first $k-j$ Haar pairs.
The coefficient algebra for the next pair, in factor $k-j+1$, is
\[
 \mathcal A_j=\M_m(\C)\otimes\M_N(\C)^{\otimes(k-j)}
       \otimes C_r^*(\F_2)^{\otimes(j-1)}.
\]
Here $C_r^*(\F_2)$ is the norm-closed algebra generated by the
regular shifts; all tensor products act on separate Hilbert spaces.
The product trace on $\mathcal A_j$ is normalized and faithful.
The conditioned polynomial is self-adjoint and its coefficients
are independent of the next Haar pair. The one-pair estimate thus
applies conditionally:
\[
 \E_{k-j+1}\norm{H_{j-1}}_p
 \le(1+N^{-1/2})\norm{H_j},
\]
where $\E_{k-j+1}$ integrates only that pair.
To repeat this step, we convert the operator norm on the right back
to a normalized $p$-norm.
For integers $r,j\ge1$ and every even integer $p\ge2$,
\cite[Lemma~9.4]{BC} states that a self-adjoint linear polynomial
$Q$ with coefficients in $\M_r(\C)$, evaluated in the regular
representation of $\F_2^j$, satisfies
$\norm Q\le(2rp^{3j})^{1/p}\norm Q_p$, where the $p$-norm uses
the normalized matrix trace and the canonical product trace.
For $1\le j<k$, the conditioned $H_j$ has this form with
$r=mN^{k-j}$, so the estimate applies throughout
\eqref{eq:haar-hypotheses} and gives
\[
 \norm{H_j}
 \le(2mN^{k-j}p^{3j})^{1/p}\norm{H_j}_p.
\]
At $j=k$, the conditional estimate already ends with
$\norm{H_k}$, so no further conversion is needed.
Starting with $\norm{H_0}\le(mN^k)^{1/p}\norm{H_0}_p$ and
integrating the pairs successively gives the following bound.
The product is empty when $k=1$, and the last inequality uses
$m\ge1$ and $p\ge2$.
\begin{align*}
 \E\norm{H_0}
 &\le(mN^k)^{1/p}
     \prod_{j=1}^{k-1}(2mN^{k-j}p^{3j})^{1/p}
     (1+N^{-1/2})^k\norm{H_k}\\
 &=\bigl[2^{k-1}m^kN^{k(k+1)/2}p^{3k(k-1)/2}\bigr]^{1/p}
      (1+N^{-1/2})^k\norm{H_k}\\
 &\le(2m)^{(k+1)/p}N^{k(k+1)/(2p)}
     p^{3k(k+1)/(2p)}(1+N^{-1/2})^k\norm{H_k}.
 \qedhere
\end{align*}
\end{proof}

\subsection{Coefficient size and error multiplication}
\label{app:reduction-costs}

We now verify the coefficient-size and error bounds in
\eqref{eq:error-size}. For this and the next subsection, assume
$K\ge2$ and $n\ge n_0(K)$.
Let $h=\ln K$, $y=1+h$, $v=\ln n$, and $u=y+v$, and retain
$q,s$ and $\ell$ from Step~3. The threshold
$n_0(K)=\lceil256y^2\rceil$ gives $n>512$, hence $v>6$ and
$u>23/3>15/2$, since $y>5/3$.

\paragraph{Summing the reduction costs.}
The definitions of $q,\ell,s$ give
\[
 q=\lceil\log_2n\rceil\le1+\frac v{\ln2}\le\frac32u,
 \qquad
 \ell\le1+\frac{2h}{\ln2}\le3y,
\]
where $1/\ln2<3/2$. Moreover, $\ln3<4/3$ and
$\ln y\le y/2$ yield
\[
 s=\lceil\log_2\ell\rceil
 \le1+\frac32\ln(3y)
 \le3+\frac34y\le u.
\]
The last inequality uses $v>6$. Thus $q+s\le5u/2$.
Also $\ln(2n)\le u$. Using $\lceil x\rceil\le x+1$ and summing
geometric series gives
\[
 \sum_{j=1}^q\left\lceil\frac n{2^j}\right\rceil
 \le n\sum_{j=1}^q2^{-j}+q\le n+q,
 \qquad
 \sum_{i=1}^s\left\lceil\frac\ell{2^i}\right\rceil
 \le\ell+s.
\]
The second sum is zero when $s=0$.

Since $1+2^{j+1}K^{\lceil n/2^j\rceil}
\le2^{j+2}K^{\lceil n/2^j\rceil}$, the support bound
\eqref{eq:coordinate-support} yields
\begin{align*}
 \sum_{j=1}^q\ln B_j
 &\le h(n+q)+\ln2\sum_{j=1}^q(j+2)\\
 &=nh+qh+\frac{q(q+5)}2\ln2\\
 &\le nh+h+\frac{hv+v^2/2}{\ln2}+\frac72v+3\ln2\\
 &\le nh+\frac34(h+v)^2+\frac72(h+v)+3\ln2\\
 &\le nh+\frac34u^2+2u.
\end{align*}
For the penultimate line, use $1/\ln2<3/2$ and expand $(h+v)^2$;
the last line follows from $h+v=u-1$ and $3\ln2<11/4$.
Likewise, \eqref{eq:word-counts}, $s\le u$, $\ell\le3y\le3u$,
and $\ln3<3/2$ give
\[
 \sum_{i=1}^s\ln C_i
 \le s\ln(2n)+(\ell+s)\ln3
 \le u^2+6u.
\]
Thus
\begin{equation}
 \sum_{j=1}^q\ln B_j\le nh+\frac34u^2+2u,\qquad
 \sum_{i=1}^s\ln C_i\le u^2+6u.
 \label{eq:summed-costs}
\end{equation}

\paragraph{Absorbing the lower-order terms into the threshold.}
For fixed $h\ge\ln2$, define
\[
 f(x)=\frac{x}{(y+\ln x)^2},\qquad
 f'(x)=\frac{y+\ln x-2}{(y+\ln x)^3}>0
 \quad(x\ge2).
\]
Let $x_0=256y^2$. Concavity of the logarithm at $2$ gives
$2\ln y\le y-2+2\ln2$, and therefore
\[
 y+\ln x_0
 =y+8\ln2+2\ln y
 \le2y+10\ln2-2<2y+5<5y.
\]
Here $\ln2<7/10$ and $y>5/3$. Hence
$f(x_0)\ge256/25$. Since $n\ge n_0(K)=\lceil x_0\rceil$,
monotonicity gives
\begin{equation}
 u^2\le\frac{25}{256}n,\qquad n\ge n_0(K).
 \label{eq:threshold-absorption}
\end{equation}

\paragraph{Total error multiplication.}
By \eqref{eq:net-norm}, the logarithm of the first factor in
\eqref{eq:error-product} satisfies
\[
 \ln\!\left[3\bigl(1+\norm{P(\lambda)}^{-1}\bigr)\right]
 \le\ln\!\left[3\left(1+\frac{K^{(n+1)/2}}{\sqrt2}\right)\right]
 \le\frac{(n+1)h}{2}+\ln(3\sqrt2)
 \le\frac{nh}{2}+u.
\]
Here $K^{(n+1)/2}/\sqrt2\ge1$, and the last step uses $n\ge2$,
$h\ge\ln2$, and $\ln(3\sqrt2)<3/2$.
Taking logarithms in \eqref{eq:error-product} and using
$q+s\le5u/2$, $\ln6<2$, and \eqref{eq:summed-costs}, we obtain
\begin{align*}
 \ln\mathfrak T_n
 &\le\frac{nh}{2}+u+(q+s)\ln6
       +\sum_{j=1}^q\ln B_j+\sum_{i=1}^s\ln C_i\\
 &\le\frac{nh}{2}+u+5u+nh+\frac74u^2+8u\\
 &=\frac32nh+\frac74u^2+14u\\
 &\le\frac32nh+\frac{217}{60}u^2
 <\frac32nh+\frac n2=\gamma_Kn.
\end{align*}
The final line uses $u>15/2$, \eqref{eq:threshold-absorption},
and $(217/60)(25/256)<1/2$.
This proves $\mathfrak T_n\le e^{\gamma_Kn}$.

\paragraph{Matrix coefficient size.}
The logarithmic contribution of the word reductions is
\begin{align*}
 \ln\!\left[\prod_{j=1}^q(2B_j)\prod_{i=1}^s(2C_i)\right]
 &=(q+s)\ln2+\sum_{j=1}^q\ln B_j+\sum_{i=1}^s\ln C_i\\
 &\le nh+\frac74u^2+\frac{21}{2}u\\
 &\le nh+\frac{63}{20}u^2
 \le nh+\frac{315}{1024}n<nh+\frac n3.
\end{align*}
We used $u>15/2$ to bound $(21/2)u\le(7/5)u^2$, and then
\eqref{eq:threshold-absorption}. Including the net size
$J_n\le(1+2n)^{K^{2n}-1}$, the initial reduction, and the final
doubling in \eqref{eq:coefficient-product} gives
\begin{align*}
 \ln(2m_n)
 &\le\ln4+\ln J_n+nh
       +\ln\!\left[\prod_{j=1}^q(2B_j)\prod_{i=1}^s(2C_i)\right]\\
 &\le\ln4+(K^{2n}-1)\ln(1+2n)+2nh+n/3.
\end{align*}
The remaining terms fit into one further copy of
$K^{2n}\ln(1+2n)$. Indeed, $\ln4\le2nh$ and $n/3\le nh/2$ imply
\[
 \ln4+2nh+n/3\le\frac92nh
 <2e\,nh\le e^{2nh}=K^{2n}
 \le K^{2n}\ln(1+2n).
\]
Here $e^{2x}\ge2ex$ for $x>0$ follows from
$e^{2x-1}\ge1+(2x-1)$, and $\ln(1+2n)>1$ for $n\ge2$.
Therefore
\[
 \ln(2m_n)
 \le(2K^{2n}-1)\ln(1+2n)
 \le2K^{2n}\ln(1+2n),
\]
which completes the two bounds in \eqref{eq:error-size}.

\subsection{Verification of the final norm error}
\label{app:haar-error}

Retain $h=\ln K$ and let $\alpha=1/80$ and $\zeta_K=2h+1/2$.
The parameters in Step~4 satisfy
\[
 b_K=40(7h+2),\qquad \alpha b_K=\gamma_K+\zeta_K,
 \qquad \varepsilon_n^{-1}=ne^{\gamma_Kn}.
\]
We will also use $b_K\le160\zeta_K$ and $\zeta_K>3/2$.
Since $N=\lceil e^{b_Kn}\rceil$,
\[
 e^{b_Kn}\le N\le2e^{b_Kn},\qquad
 \ln N\le b_Kn+\ln2\le2b_Kn.
\]
Also $N^\alpha\ge8$, so the choice
$p=2\lfloor N^\alpha/4\rfloor$ obeys
\[
 p\ge\tfrac12N^\alpha-2\ge\tfrac14N^\alpha,
 \qquad \frac1p\le4e^{-\alpha b_Kn},
 \qquad \ln p\le\alpha\ln N.
\]

Separate the three contributions in \eqref{eq:En} as
$E_n=A_n+D_n+R_n$, where
\begin{align*}
 A_n&=\frac{(n+1)\ln(2m_n)}p,\\
 D_n&=\frac{n(n+1)}{2p}(\ln N+3\ln p),\\
 R_n&=n\ln(1+N^{-1/2}).
\end{align*}
The coefficient-size bound in \eqref{eq:error-size} gives
\begin{align*}
 \frac{A_n}{\varepsilon_n}
 &\le8n(n+1)\ln(1+2n)
       e^{(2h+\gamma_K-\alpha b_K)n}\\
 &=8n(n+1)\ln(1+2n)e^{-n/2},
\end{align*}
where $2h+\gamma_K-\alpha b_K=2h-\zeta_K=-1/2$.
For the terms involving $\ln N$ and $\ln p$, we obtain
\begin{align*}
 \frac{D_n}{\varepsilon_n}
 &\le2n^2(n+1)(1+3\alpha)\ln N\,
       e^{(\gamma_K-\alpha b_K)n}\\
 &\le4b_K(1+3\alpha)n^3(n+1)e^{-\zeta_Kn}\\
 &\le664\zeta_Kn^3(n+1)e^{-\zeta_Kn},
\end{align*}
since $b_K\le160\zeta_K$ and
$4\cdot160(1+3/80)=664$.
The inequality $\ln(1+x)\le x$ gives
\[
 \frac{R_n}{\varepsilon_n}
 \le n^2e^{\gamma_Kn}N^{-1/2}
 \le n^2e^{(\gamma_K-b_K/2)n}
 \le n^2e^{-n},
\]
where $b_K/2-\gamma_K=(277/2)h+79/2>1$.

For $n\ge2$, we have $n+1\le3n/2$ and $\ln(1+2n)\le n$.
Moreover, $\zeta_K>3/2$, and $t\mapsto te^{-nt}$ decreases for
$t\ge3/2$. Thus
\[
 664\zeta_Kn^3(n+1)e^{-\zeta_Kn}
 \le664\cdot\frac32\cdot\frac32\,n^4e^{-3n/2}
 =1494n^4e^{-3n/2}.
\]
Combining the three estimates yields
\[
 \frac{E_n}{\varepsilon_n}
 \le12n^3e^{-n/2}+1494n^4e^{-3n/2}+n^2e^{-n}.
\]
Each term on the right decreases for $n\ge64$: their logarithmic
derivatives are $3/n-1/2$, $4/n-3/2$, and $2/n-1$.
At $n=64$, the bounds $e>2$ and $1494<2^{11}$ give
\begin{align*}
 12\cdot64^3e^{-32}
 +1494\cdot64^4e^{-96}+64^2e^{-64}
 &<12\cdot2^{18-32}+2^{11+24-96}+2^{12-64}\\
 &=3\cdot2^{-12}+2^{-61}+2^{-52}<\frac1{16}.
\end{align*}
Since $n\ge n_0(K)>64$, this proves $E_n\le\varepsilon_n/16$.
To compare this with the required logarithm, use
$\ln(1+x)\ge x/(1+x)$ and $0<\varepsilon_n<1$:
\[
 \ln(1+\varepsilon_n/2)
 \ge\frac{\varepsilon_n}{2+\varepsilon_n}
 >\frac{\varepsilon_n}{3}
 >\frac{\varepsilon_n}{16}
 \ge E_n.
\]
This proves \eqref{eq:En-bound}.

\smallskip
\noindent\begin{minipage}{\textwidth}
\textit{Zyphra, San Francisco, CA 94105}\\
\textit{Email:} \href{mailto:wang.jinzhao226@gmail.com}{\texttt{wang.jinzhao226@gmail.com}}
\end{minipage}

\end{document}